\newcommand\al{{\alpha}}
\newcommand\ep{\epsilon}
\newcommand\sig{\sigma}
\newcommand\Sig{\Sigma}
\newcommand\sA{{\ensuremath{{\mathcal A}}}}
\newcommand\sB{{\ensuremath{{\mathcal B}}}}
\newcommand\sC{{\ensuremath{{\mathcal C}}}}
\newcommand\sD{{\ensuremath{{\mathcal D}}}}
\newcommand\sH{{\ensuremath{{\mathcal H}}}}
\newcommand\sM{{\ensuremath{{\mathcal M}}}}
\newcommand\sW{{\mathcal W}}
\newcommand\sY{{\mathcal Y}}
\newcommand{\wt}{\widetilde}
\newcommand{\ee}{\end{equation}}
\newcommand{\eea}{\end{eqnarray}}
\newcommand{\bega}{\begin{gather}}
\newcommand{\eega}{\end{gather}}
\newcommand{\bi}{\begin{itemize}}
\newcommand{\ei}{\end{itemize}}
\newcommand{\ben}{\begin{enumerate}}
\newcommand{\een}{\end{enumerate}}
\newcommand{\bca}{\begin{cases}}
\newcommand{\eca}{\end{cases}}
\newcommand{\bln}{\begin{align}}
\newcommand{\eln}{\end{align}}
\newcommand{\bst}{\begin{split}}
\newcommand{\est}{\end{split}}
\def\ie{\begin{equation}\begin{aligned}}
\def\fe{\end{aligned}\end{equation}}
\newcommand{\bma}{\le(\begin{matrix}}
\newcommand{\ema}{\end{matrix}\ri)}
\newcommand{\bfl}{\noindent\begin{flushleft}}
\newcommand{\efl}{\par\end{flushleft}}
\newcommand\p{\ensuremath{\partial}}
\title{The Making of von Neumann Algebras from Bulk Focusing}
\author{Netta Engelhardt}
\author{and Hong Liu}
\affiliation{Center for Theoretical Physics -- a Leinweber Institute, Massachusetts Institute of Technology, \\Cambridge, MA 02139, USA}
\emailAdd{engeln@mit.edu}
\emailAdd{hong\_liu@mit.edu}
\abstract{The single-trace, infinite-$N$ algebra of an arbitrary region may or may not be a von Neumann algebra depending on the GNS sector. In this paper we identify the holographic dual of this mechanism as a consequence of the focusing of null geodesics; more precisely, this GNS sector-dependence corresponds to the well-known difference between null congruences fired from the bulk and those fired from the boundary. As part of establishing this property, we give a rigorous formulation and proof of causal wedge reconstruction for those general boundary subregions whose single trace algebras support von Neumann algebras at large-$N$. We discuss a possible finite-$N$ extension and interpretation of our results as an explanation for the Hawking area theorem.}
\preprint{MIT-CTP/5918}
\begin{document}

\maketitle

\section{Introduction}\label{sec:intro}  
The algebraic approach to the AdS/CFT correspondence
has illuminated many aspects of duality, especially as applied to emergence of the dual bulk geometry and its quantum states (see e.g.~\cite{Reh99a,Reh00,Har16, EngWal18, KamPen19, Fau20, LeuLiu21a,LeuLiu21b,LeuLiu22,Wit21,ChaPen22a,ChaPen22b,FauLi22,JenSor23,KudLeu23,EngLiu23}). An essential part of this is the duality between boundary spatial regions and bulk entanglement (and simple) wedges~\cite{BouFre12, CzeKar12, Wal12, DonHar16, CotHay17, HayPen18}.
Recently, this connection was generalized to arbitrary subregions at infinite-$N$ under the moniker ``subregion/subalgebra duality''~\cite{LeuLiu22}; this serves as a framework through which bulk locality, causal structure, and other geometric properties are encoded in boundary data. In the strict large-$N$ limit, details about bulk causality have been explicitly identified with the emergent type III$_{1}$ von Neumann subalgebra of the boundary CFT~\cite{LeuLiu21a,LeuLiu21b,LeuLiu22}. 

Subregion/subalgebra duality equates a bulk spacetime subregion with an emergent type III$_1$ von Neumann subalgebra of the boundary CFT. 
One important tool for identifying the boundary algebra dual of a bulk region is causal wedge reconstruction~\cite{BanDou98,Ben99,BalKraLaw98,HamKab05,HamKab06,KabLif11,Hee12,HeeMar12,PapRaj12,Mor14,EngPen21a, Wit23}: given some boundary domain of dependence $D[\sigma]$, the subalgebra associated to it should be a von Neumann algebra. The corresponding bulk region is proposed to be the causal completion of the causal wedge: $(J^{+}_{\rm bulk}[D[\sigma]]\cap J^{-}_{\rm bulk}[D[\sigma]])''$.  
Causal wedge reconstruction can be justified using the timelike tube theorem~\cite{Bor61,Ara62,StrWit23b,Str00, Wit23} and is expected to be generally implemented via the HKLL procedure~\cite{HamKab05,HamKab06,Mor14}. The latter is an explicit protocol that requires solving a non-standard Cauchy problem using the bulk equations of motion; see~\cite{Wit23} for a discussion. 

A primary motivation for undestanding holography for the causal wedge is its relevance to the Hawking area theorem~\cite{Haw71}, as the emergence of this universal law of gravity remains poorly understood in general, in contrast with the area law for apparent horizons (``holographic screens''~\cite{Bou99b})~\cite{BouEng15a, BouEng15b, EngWal17b, EngWal18}. If causal  wedges and their areas had clear holographic descriptions and the latter admitted generalizations to arbitrary time bands, we would have a fundamental quantum explanation for the Hawking area law. Information-theoretic approaches to this problem have thus far failed~\cite{EngWal17a}, so it is natural to switch an algebraic route instead.

However, while the algebras of domains of dependence of (closed, acausal) regions are expected to be von Neumann,\footnote{So long as any such region is taken to lie on a single Cauchy slice~\cite{HarShaTA}.} the situation is more subtle for more general regions. While the algebra of a causally concave region is unlikely to ever be von Neumann, causally \textit{convex} regions\footnote{These are regions with the property that any causal curve with endpoints in the region lives entirely in the region.} that are not domains of dependence can in fact support von Neumann subalgebras at infinite-$N$ (i.e. for generalized free fields). For example, consider a boundary time band in a CFT state dual to a spherically symmetric bulk. Let $I_w$ be the time band, which is of width $w < \pi R$ -- i.e., the spacetime region with $t \in \left(-\frac{w}{2}, \frac{w}{2}\right)$ -- on the boundary of global AdS, where $R$ is the AdS radius.
 The causal wedge\footnote{Strictly speaking ``causal wedge'' is typically defined for a boundary domain of dependence and is generically not causally complete; for arbitrary $Y$ and to include the causal completion we call $(J^{+}[Y]\cap J^{-}[Y])''$ a \textit{generalized} causal wedge. } of this region, $\sW_{\rho_w} \equiv J^+_{\rm bulk} (I_w) \cap J^-_{\rm bulk} (I_w)$, 
 is a spherical Rindler region with  ``radius'' $\rho_w$ given by $\rho_w =R \tan \left({\frac{\pi}{2}} - \frac{w}{ 2R} \right)$. Note that $\sW_{\rho_w} = \sW_{\rho_w}''$  is a bulk domain of dependence. 
Extending causal wedge reconstruction to this case, we find the identification~\cite{BanBry16,LeuLiu22}: 
\be \label{timeBI}
  \sM_{\sW_{\rho_w}} = \sY_{I_w}  ,
\ee
where $  \sM_{\sW_{\rho_w}}$ is operator algebra of bulk quantum field theory (in the $G_N \to 0$ limit) in $\sW_{\rho_w}$ and $\sY_{I_w}$ is the algebra generated by single-trace operators in $I_w$ (in the large $N$ limit).  

Taking commutant on both sides of~\eqref{timeBI} and assuming the bulk quantum field theory obeys the Haag duality, we find~\cite{LeuLiu22} 
\be 
 \sM_{\sD_{\rho_w}} = \sY_{I_w}' 
\ee
where $\sD_{\rho_w}$ the bulk causal complement of $\sW_{\rho_w}$ -- i.e., the spherical diamond region in the ``middle'' of global AdS.
A diamond region in the bulk that does not intersect the boundary is thus dual to the commutant of a time-band algebra. By considering increasing values of $w < \pi R$, we can use $\sY_{I_w}'$ to describe progressively smaller diamond regions in the bulk. This setup probes locality in AdS and illustrates how the commutant structure of the boundary algebra encodes the bulk causal structure. The discussion can be generalized to algebras of homogeneous time bands in other spherically symmetric states, such as the thermofield double state or states corresponding to black holes formed from spherically symmetric collapse.  
Such time-band algebras and their commutants have also been used to investigate bulk locality and causal structure in the stringy regime~\cite{GesLiu24} (see also~\cite{JenRaj24}).

Consider a causally convex region $Y$ (including e.g. a time band) and the corresponding algebra of single trace operators ${\cal A}_{Y}$ associated with the region. We denote the representation of ${\cal A}_{Y}$ in a GNS-sector as ${\cal Y}_{Y}$. 
In a slight abuse of notation, we say ${\cal Y}_{Y}$ is von Neumann if its double commutant does not involve single-trace operators of a larger region.\footnote{The double commutant will of course introduce new operators to facilitate the closure; our focus here is on identifying the subregions for which the double commutant introduces new single-trace operators and those  subregions for which it does not.} 
Interestingly, whether ${\cal Y}_{Y}$ is von Neumann or not appears to be state-dependent, or more precisely GNS sector dependent~\cite{LeuLiu24}. On a given GNS Hilbert space ${\cal H}$ at large-$N$, ${\cal Y}_{Y}$ may be von Neumann, while on a different GNS sector ${\cal H}'$, ${\cal Y}_{Y}$ may not be von Neumann. For example, in non-spherically symmetric spacetimes, the homogeneous time band algebra may not be identical to its own double commutant. Another example of the union of two causal diamonds at finite and infinite temperatures is illustrated in Fig.~\ref{fig:diamonds}: at any finite temperature, the double commutant of the union of the two causal diamonds $D[A_{1}]\cup D[A_{2}]$ introduces new single-trace operators localized on a larger region. At infinite temperature, however, the double commutant introduces no new geometrical elements.

A necessary condition for the application of subregion/subalgebra duality is that the boundary subalgebra in question be von Neumann. Thus the identification of all regions that support a von Neumann algebra in a given GNS sector is of paramount importance for understanding bulk reconstruction. Since in AdS/CFT different GNS sectors correspond to different bulk geometries at large-$N$, understanding the dual bulk mechanism mirroring this GNS sector-dependence has direct implications for the emergence of the spacetime geometry. In this paper we thus focus on the following two questions: under what circumstances can one rigorously establish that the boundary algebra is equivalent to the bulk operator algebra associated with its causal wedge? What is the most general class of boundary regions for which such reconstruction holds? 

As we will prove, the bulk manifestation of the change in the regions that support von Neumann algebras between different GNS sectors is the pattern of caustics in null congruences. A well-known effect of null geodesic focusing is that a null congruence fired from boundary to bulk will in general not coincide with a null congruence fired from bulk to boundary. More precisely, the generators of $\partial J_{\rm bulk}^{-}[Y]$ are past-directed null geodesics fired from $Y$ up to intersections to the past. The congruence fired from a cross-section of $\partial J_{\rm bulk}^{-}[Y]$ back towards $Y$ is generated by future-directed null geodesics up to intersections to the future. These will not agree on any generators that have intersections, which are generic.~\footnote{This phenomenon is the reason that the causal and entanglement wedges can both end on the same boundary region despite the latter being generically much deeper in the bulk than the former.} For spacetimes not protected by Killing symmetries or even then for regions that do not respect these symmetries, there will be generators that reach intersections. We find that this phenomenon, which results in different mismatched generators in different geometries, is the bulk realization of the sector-dependence discussed above. As a consequence, we provide general holographic criteria under which causal wedge reconstruction is valid, and the associated time band or more generally causally convex region, is a von Neumann algebra.

\begin{figure}
    \centering
    \includegraphics[width=0.7\linewidth]{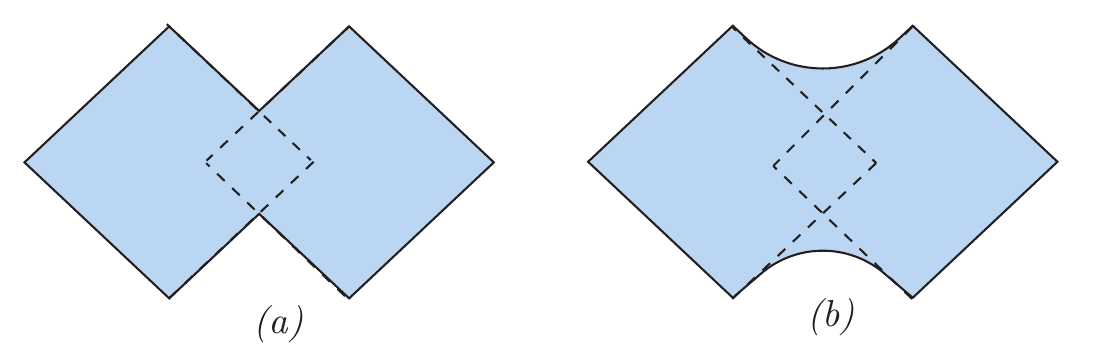}
    \caption{
    (a): the union of two intersecting causal diamonds in $(1+1)$-dimension.
    The single-trace algebra in this region is not a von Neumann algebra in the vacuum state, but is one in the thermofield double state in the infinite temperature limit~\cite{LeuLiu24}. (b): From an old result of Araki~\cite{Ara62},  the algebra of single-trace operators in the region shown in the plot is a von Neumann algebra in the vacuum state. In the thermofield double state of a general temperature, the region which gives a von Neumann algebra lies between (b) and (a).     }
    \label{fig:diamonds}
\end{figure}

We now summarize our result, which is given in a number of theorems in the main text. 
The algebra $\sY_Y$ of single-trace operators associated with a boundary spacetime region $Y$ admits a standard reconstruction of the causal wedge and is a von Neumann algebra if and only if the following condition is true: denoting the causally complete, generalized causal wedge
\be \label{dedsC}
\sC_Y = \left( J_{\rm bulk}^{+}[Y]\cap  J_{\rm bulk}^{-}[Y] \right)'' ,
\ee
and the boundary manifold as $B$, we must have 
\be \label{keycon}
\sC_Y \cap B=Y \ .
\ee
For such a $Y$, we prove that causal wedge reconstruction holds 
\be \label{gcwr}
\sY_Y  =  \sM_{\sC_Y} 
\ .
\ee
Taking the commutant on both sides of~\eqref{gcwr} gives
\be \label{cdiaD}
\sY_Y'  =  \sM_{\sC_Y}' = \sM_{\sC_Y'} ,
\ee
where in the second equality we have again assumed bulk Haag duality. Equation~\eqref{cdiaD} identifies the bulk causal complement  of $\sC_Y$ with the commutant  of $\sY_Y$. 

When $Y$ does not satisfy the conditions above, the corresponding single-trace operator algebra $\sY_Y$ is not a von Neumann algebra. The minimal von Neumann algebra containing $\sY_Y$ is given by its double commutant $\sY_Y''$. We give a general description of the bulk subregion dual to $\sY_Y''$. This can be viewed as a refinement and realization of the conjectures of~\cite{FauLi22} (see also~\cite{LeuLiu22}). 

We also give a more speculative argument that time-band algebras can be defined at finite-$N$, and the area of the edge of the causal wedge gives the entanglement entropy for the corresponding algebra. This paves the way for an algebraic interpretation of a generalized second law at arbitrary small time steps in fully dynamical spacetimes in which the area of the event horizon changes at leading order as a function of time. An algebraic derivation of the Generalized Second Law was obtained for horizons with no leading order change to the area in~\cite{Wal09,Wal10, Wal11, FauSpe24}, but this still falls short of an algebraic derivation of the leading order component to the GSL in broad generality. 

The plan of the paper is as follows. Section 1.1 outlines the assumptions and conventions that will be used throughout. Section 2 includes the main technical theorems about bulk causal structure. In Section 3 we prove causal wedge reconstruction for $Y=D[C_{Y}]\cap B$, and Section 4 gives a heuristic argument for the finite-$N$ case. A number of technical causal structure lemmas are proved in Appendix A.     

\subsection{Assumptions and Conventions}

We begin with our assumptions. Let $(M,g)$ be an asymptotically AdS spacetime and let $B$ be its asymptotic (AdS) boundary. We make the following assumptions about $M$ and $B$:
\begin{itemize}
    \item $M\cup B$ is globally hyperbolic and $B$ is separately globally hyperbolic.
    \item The causal structure of $M$ respects the causal structure of $B$: i.e boundary-to-boundary causal curves cannot travel faster through $M$ than on $B$.\footnote{This is guaranteed by the Null Convergence Condition (NCC)~\cite{GaoWal00} but merely requiring boundary causality of the bulk is in fact a weaker constraint than the NCC~\cite{EngFis16}; in the interest of making as few assumptions as possible, we assume only boundary causality.}
    \item All Cauchy slices of both $M\cup B$ and $B$ will always be taken to be acausal.
    \item The AdS boundary $B$ is taken to be spatially compact.
    \item We work in the infinite-$N$ and $\lambda$ limit of AdS/CFT: the bulk is described by quantum fields propagating on a fixed curved background.
\end{itemize}
We now state our conventions:
\begin{itemize}
    \item If $Q\subset M\cup B$, $\partial Q$ will denote the boundary of $Q$ in $M\cup B$, that is:
    $$\partial Q = \partial Q|_{M}\cup (Q\cap B).$$
    We will sometimes abuse notation: for an achronal set $Q$ which is compact in $M\cup B$, we will sometimes use $\partial Q$ to denote the boundary of $Q$ within any Cauchy slice that contains it.  
    \item Given a set $Q\subset M\cup B$ or  $Q\subset B$, we will use $J^{\pm}[Q]$ to denote the causal future/past of $Q$ in $M\cup B$ (regardless of whether $Q\subset M\cup B$ or $Q\subset B$). If $Q\subset B$, we will use $J_{B}^{\pm}[Q]$ to denote the causal future/past of $Q$ in $B$.
    \item To denote the boundary of a set $X\subset B$, we use $\partial_{B}X$.
    \item We define Edge$[Q]=\partial_{B}(Q\cap B)$. This differs from the topological definition in e.g.~\cite{Wald}, which we will refer to as lowercase ``edge'' rather than ``Edge''. 
    \item A spacetime region $Q$ (in either $B$ or $M\cup B$) will be said to be causally convex if any causal curve with endpoints in $Q$ lies entirely in $Q$. Equivalently, $Q$ is causally convex if it is globally hyperbolic as a separate spacetime manifold~\cite{BosFew20}: i.e. it is given by a Cauchy development of a surface $\sigma$ that is Cauchy-splitting, where said Cauchy development \textit{may be non-maximally extended}. That is, there exists an acausal Cauchy-splitting surface $\sigma$ that splits $Q$ into $J^{\pm}[\sigma]\cap Q$ where every inextendible causal curve that intersects $Q$ crosses $\sigma$. See Fig.~\ref{fig:causallyconvex} for examples of such regions. We denote domains of dependence of regions $X\subset Q$ within $Q$ as $D_{Q}[X]$. Thus $Q$ is causally convex if $D_{Q}[\sigma]=Q$.  We emphasize $Q$ can be causally convex even when it is not a complete domain of dependence within $M\cup B$ or within $B$. 

    \begin{figure}
        \centering
        \includegraphics[width=\linewidth]{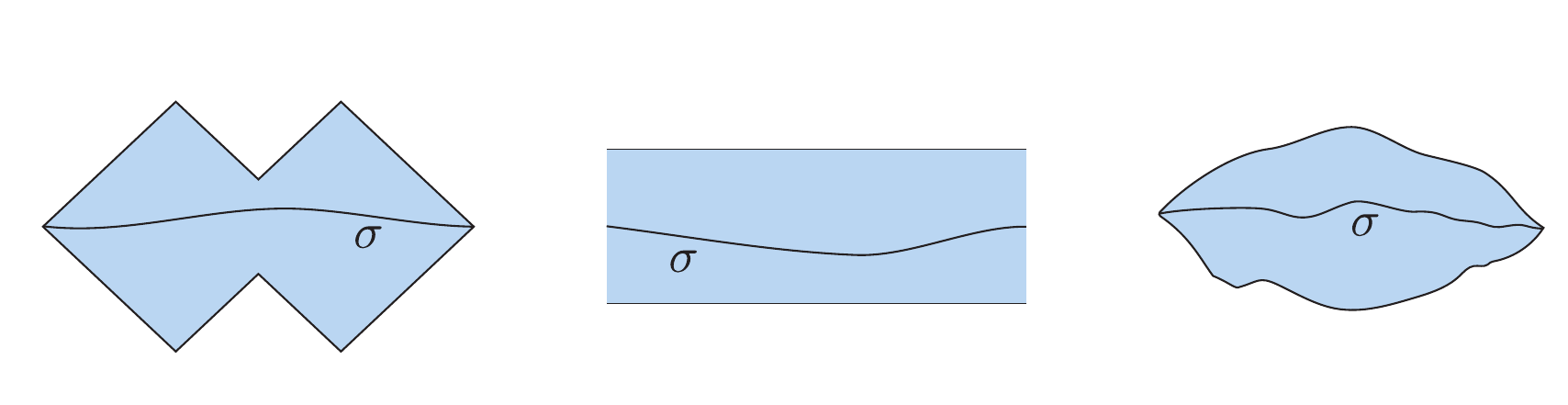}
    \caption{Three different causally convex regions: on the left, the union of two causal diamonds; in the middle, a time band; on the right: a wiggly region. A ``Cauchy surface'' $\sigma$ is marked for each region.}
        \label{fig:causallyconvex}
    \end{figure}
    \item Let $Q$ be causally convex with a Cauchy slice $\sigma$. Denote $Q^{\pm}=D_{Q}^{\pm}[\sigma]$ and $\partial^{\pm}[Q]=\partial {\cal H}^{\pm}_{Q}$ where ${\cal H}^{\pm}_{Q}$ is the future/past Cauchy horizon of $Q$. Note that the division into $Q^{\pm}$ is Cauchy-slice-dependent. 
    \item All domains of dependence are taken to be closed.
    \item The complete causal cone of a point $p$ or a set $S$ is denoted $J(p)$ or $J[S]$, respectively.
    \item The algebra generated by single-trace operators in a boundary spacetime region $Y$ in the infinite-$N$ limit  is denoted ${\cal Y}_{Y}$. The \textit{bulk} operator algebra in a spacetime region $U$ is denoted ${\cal M}_{U}$. 
    \item All other notation and conventions are as in~\cite{Wald}.
\end{itemize}

\section{Identifying Bulk and Boundary Causal Regions}
Below we first state (Sec. 2.1) and then prove (Sec. 2.2) the causal structure theorems necessary for our main result. 

\subsection{Main Technical Theorems}
\begin{thm}\label{Thm:Forwards} Let $Y\subset B$ be  causally convex and compact. Assume that there is no Cauchy slice of $M\cup B$ that is fully contained in both $J^{+}[Y]$ and $J^{-}[Y]$. Then there exists a bulk Cauchy slice $\Sigma$ and a Cauchy slice $\sigma$ of $Y$ such that $C_{Y} =J^{+}[Y]\cap J^{-}[Y]\cap \Sigma$ satisfies 
\ben
 \item $D[C_{Y}]=C_{Y}''$;
\item $\partial C_{Y}=(\partial J^{+}[Y]\cap \partial J^{-}[Y])\cup \sigma$.
\een 
Furthermore, for any such $C_{Y}$, $Y\subseteq D[C_{Y}]\cap B$.\footnote{A heuristic analysis pointing to this conclusion for the special case of a boundary time strip was discussed in~\cite{Hub14} in the context of hole-ography~\cite{BalCze13a, BalCho13b}.}
\end{thm}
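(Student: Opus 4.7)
The plan is to prove the theorem by direct construction of $\Sigma$ and $\sigma$, followed by verification of each property in turn. Since $Y$ is causally convex, the paper's conventions supply an acausal Cauchy-splitting surface of $Y$; I take this as $\sigma$. Using the separate global hyperbolicity of $B$, I extend $\sigma$ to an acausal Cauchy slice $\Sigma_{B}$ of $B$. The nontrivial step is then extending $\Sigma_{B}$ into the bulk to a Cauchy slice $\Sigma$ of $M\cup B$ whose intersection with the generalized causal wedge $J^{+}[Y]\cap J^{-}[Y]$ meets $\partial J^{+}[Y]$ and $\partial J^{-}[Y]$ only on their common intersection $E:=\partial J^{+}[Y]\cap\partial J^{-}[Y]$.

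The main obstacle is this construction of $\Sigma$. The hypothesis that no bulk Cauchy slice is fully contained in $J^{+}[Y]\cap J^{-}[Y]$ is precisely what is needed: it guarantees that the bulk portion of $E$ is visible from the bulk complement of the wedge on both sides, leaving room for an acausal slice $\Sigma$ that threads through $E$ and extends out into that complement. Concretely, I would realize $\Sigma$ via a smoothed bulk time function whose zero level coincides with $\Sigma_{B}$ on $B$ and is deformed so as to ride along $E$ in the bulk; the causal-structure lemmas promised in Appendix A presumably handle the technicalities coming from focusing and caustics along $\partial J^{\pm}[Y]$. With $\Sigma$ so chosen, property (2) is immediate: $\partial C_{Y}$ inside $\Sigma$ splits into the boundary part $\Sigma\cap Y=\sigma$ and the bulk part $\Sigma\cap E\subseteq E$, yielding $\partial C_{Y}=\sigma\cup(\partial J^{+}[Y]\cap\partial J^{-}[Y])$.

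Property (1) then follows from (2): the inclusion $D[C_{Y}]\subseteq C_{Y}''$ is automatic, and for the reverse direction the two-piece boundary of $C_{Y}$ shows that $C_{Y}$ is a hole-free Cauchy slice for $\sC_{Y}$. Any inextendible causal curve in $\sC_{Y}$ must cross $\Sigma$, and since $\partial C_{Y}$ inside $\Sigma$ consists of $\sigma$ on $B$ together with the null ``equator'' $E$ in the bulk, the crossing must lie inside $C_{Y}$. Finally, for $Y\subseteq D[C_{Y}]\cap B$, take $p\in Y$ and any inextendible causal curve $\gamma$ through $p$, meeting $\Sigma$ at the unique point $q$. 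If $\gamma\subset B$, then causal convexity and compactness of $Y$ force $\gamma\cap Y$ to be a closed interval meeting $\sigma$, and since $\sigma\subset\Sigma_{B}$ and $\gamma$ meets $\Sigma_{B}$ only at $q$, we get $q\in\sigma\subset C_{Y}$. If $\gamma$ enters the bulk, I would combine boundary causality --- which bounds bulk-to-boundary causal travel by boundary causal travel --- with the fact that $\sigma$ Cauchy-splits $Y$ to force $q\in J^{+}[Y]\cap J^{-}[Y]$, hence $q\in C_{Y}$.
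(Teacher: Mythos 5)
Your outline has the same overall architecture as the paper's argument (choose a Cauchy slice $\sigma$ of $Y$, build a bulk Cauchy slice $\Sigma$ through $\sigma$ and $E:=\partial J^{+}[Y]\cap\partial J^{-}[Y]$, then verify the claims), but at each stage the step that carries the actual mathematical weight is either deferred to unnamed lemmas or simply asserted. First, the existence of $\Sigma$: before any ``smoothed time function'' construction can work you must show that $\sigma$ and $E$ are \emph{mutually acausal}; this is not automatic and is where the paper does real work (Lemma~\ref{lemma3} shows $E\cap B=\partial_{B}\sigma$, and Lemma~\ref{Lemma7} rules out a chronal curve from ${\rm int}\,\sigma$ to $E$ using achronality of $\partial^{\mp}Y$). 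Your proposal never addresses this. Second, property (2) is not ``immediate'' once $\Sigma\supset\sigma\cup E$: you must also show that $\Sigma$ cannot leave $J^{+}[Y]\cap J^{-}[Y]$ through a point of $\partial J^{\pm}[Y]$ \emph{other} than one on $E$, which is precisely the statement that $E$ is a complete cross-section of both congruences (Lemma~\ref{LemmaNew}, which uses compactness of $\partial^{\mp}Y$), and you must separately establish the reverse inclusion $E\subset\partial C_{Y}$ (the second half of Lemma~\ref{lem:New013}).

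The most serious gap is in the final claim. For $p\in Y^{+}$ and a past-inextendible causal curve $\gamma$ from $p$ that enters the bulk, meeting $\Sigma$ at $q$, you certainly get $q\in J^{-}[Y^{+}]$; but your assertion that boundary causality plus the Cauchy-splitting of $Y$ by $\sigma$ ``forces $q\in J^{+}[Y]\cap J^{-}[Y]$'' is exactly the statement that needs proof, and those two ingredients do not deliver it: $\gamma$ could a priori exit $J^{+}[Y^{-}]$ through $\partial J^{+}[Y^{-}]$ before reaching $\Sigma$ and land in $\Sigma\setminus C_{Y}$. The paper closes this with Lemma~\ref{Completeness}, which shows that for the particular $\Sigma$ containing $\partial C_{Y}$ one has $J^{-}[Y^{+}]\cap\Sigma=C_{Y}$ exactly; its proof uses the completeness of the cross-section $E$ together with the hypothesis that $\Sigma\cap M\not\subset J^{+}[Y^{-}]$ --- which is where the ``no Cauchy slice contained in both $J^{\pm}[Y]$'' assumption really enters, not merely in making room for $\Sigma$. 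Without an argument of this kind your proof of $Y\subseteq D[C_{Y}]\cap B$ does not go through. (A smaller point: for claim (1), rather than arguing about curves in $\sC_{Y}$, it suffices to invoke the general fact that $D[S]=S''$ for a closed, compact achronal set supported on a Cauchy slice, Lemma~\ref{CausalCompletionLem}; your curve-crossing argument, even if completed, would again secretly rely on $J^{\mp}[Y^{\pm}]\cap\Sigma=C_{Y}$.)
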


\begin{thm} \label{thm:Cexists}
Let $C$ be any acausal closed bulk hypersurface satisfying 
\be \label{eq:Ceqn}
\partial C = (\partial J^{+}[Y]\cap \partial J^{-}[Y])\cup \sigma  \ ,
\ee
where $\partial C\cap M\neq \varnothing$ and $Y\subset B$ is causally convex.. Then there exists a Cauchy slice $\Sigma$ of $M\cup B$ such that 
\be
C_{Y}=J^{+}[Y]\cap J^{-}[Y] \cap \Sig = J^{+}[Y]\cap \Sigma=J^{-}[Y]\cap\Sigma  \ ,
\ee
where $\partial C_{Y}=\partial C$ and $D[C]=D[C_{Y}]$.
\end{thm}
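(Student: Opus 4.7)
The plan is to construct $\Sigma$ by gluing the given $C$ to an acausal extension $E$ lying in the causal complement of $Y$, joining along the shared boundary $(\partial J^{+}[Y] \cap \partial J^{-}[Y]) \cup \sigma$. Writing $W := J^{+}[Y] \cap J^{-}[Y]$ for the causal wedge, the geometric picture is that $\partial C$ separates $\overline{W}$ from the closure of its causal complement in the bulk, while $\sigma$ separates $Y$ from its complement in $B$; the hypersurface $C$ is a spacelike ``filling'' on the wedge side of this dividing surface, and the construction of $\Sigma$ amounts to producing a compatible spacelike ``filling'' on the complement side.

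I would proceed in three steps. First, extend $\sigma$ to a full acausal Cauchy slice $\tau$ of $B$; this is possible because $B$ is globally hyperbolic and spatially compact and $Y$ is causally convex and compact, so $\sigma$ is an acausal partial Cauchy surface of $B$. Second, apply Theorem~\ref{Thm:Forwards} to obtain a reference bulk Cauchy slice $\Sigma_{0}$ of $M \cup B$ whose slice $C_{Y}^{(0)} = J^{+}[Y] \cap J^{-}[Y] \cap \Sigma_{0}$ shares the bulk-facing edge $\partial J^{+}[Y] \cap \partial J^{-}[Y]$ with $C$. The complementary piece $E_{0} := \Sigma_{0} \setminus C_{Y}^{(0)}$ is then an acausal hypersurface outside $W$, carrying the correct bulk edge. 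I would then adjust $E_{0}$---using either the inherent flexibility of Theorem~\ref{Thm:Forwards} in the choice of its boundary Cauchy slice, or a direct deformation argument within the causal complement of $Y$ viewed as a globally hyperbolic subspacetime---so that $E_{0} \cap B = \tau \setminus \sigma$ and $E_{0}$ lies entirely within the bulk causal complement of $Y$. Third, define $\Sigma := C \cup E_{0}$.

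Verification of the claimed properties then proceeds as follows. Acausality of $\Sigma$ holds because $C \subset \overline{W}$ and $E_{0}$ sits in the closure of the causal complement of $Y$, the two meeting only along $\partial C$, and both are acausal by assumption and construction. The Cauchy property is inherited from $\Sigma_{0}$: since $C$ and $C_{Y}^{(0)}$ have identical boundaries, they fill in the same opening bounded by $E_{0}$, so every inextendible causal curve of $M \cup B$ meets $\Sigma$ exactly once. Finally, the fact that $E_{0}$ lies in the causal complement of $Y$---so in particular avoids $J^{+}[Y] \cup J^{-}[Y]$ away from $\partial C$---yields $J^{+}[Y] \cap \Sigma = J^{-}[Y] \cap \Sigma = J^{+}[Y] \cap J^{-}[Y] \cap \Sigma$, which defines $C_{Y}$; the equalities $\partial C_{Y} = \partial C$ and $D[C_{Y}] = D[C]$ then follow by construction, since $C_{Y}$ differs from $C$ at most by a deformation that keeps the same boundary edge and hence the same domain of dependence.

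The main obstacle I anticipate is in the adjustment step: ensuring that $E_{0}$ can be taken strictly in the causal complement of $Y$, rather than merely outside the wedge $W$. A priori, $E_{0}$ could dip into $J^{+}[Y] \setminus J^{-}[Y]$ or vice versa, which would spoil the crucial equality $J^{+}[Y] \cap \Sigma = J^{-}[Y] \cap \Sigma$. Controlling this requires careful analysis of the null congruences generating $\partial J^{\pm}[Y]$ away from the edge---precisely the setting the causal-structure lemmas in Appendix A are designed to handle. I expect the key input is a duality-type statement identifying the bulk causal complement of $Y$ with the generalized causal wedge of the boundary causal complement of $Y$ in $B$, providing a globally hyperbolic ambient in which the existence of $E_{0}$ with the required properties reduces to a standard Cauchy-extension argument.
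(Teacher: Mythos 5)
Your overall strategy --- build a Cauchy slice of $M\cup B$ through the prescribed boundary $(\partial J^{+}[Y]\cap\partial J^{-}[Y])\cup\sigma$ and use the Cauchy-splitting property of that surface to exchange the wedge piece for $C$ --- is essentially the route the paper takes (it invokes Lemma~\ref{Lemma7} for the existence of such a slice, Lemmas~\ref{LemmaNew} and~\ref{lemma3} to establish Cauchy splitting, and then works with a spanning surface having the same boundary as $C$, concluding $D[C]=D[C_{Y}]$ rather than literally gluing $C$ into the slice). However, the step you flag as your ``main obstacle'' is a genuine gap, and it is not peripheral: the claim that the complement piece $E_{0}$ avoids $J^{+}[Y]\cup J^{-}[Y]$ away from $\partial C$ is equivalent to the equalities $J^{+}[Y]\cap\Sigma=J^{-}[Y]\cap\Sigma=J^{+}[Y]\cap J^{-}[Y]\cap\Sigma$, which are themselves part of the conclusion of the theorem. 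Without that step you obtain only $C_{Y}=J^{+}[Y]\cap J^{-}[Y]\cap\Sigma$, not the triple equality.

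The missing ingredient is not the duality-type statement you speculate about (identifying the bulk causal complement of $Y$ with the wedge of the boundary complement); it is the fact, Lemma~\ref{LemmaNew}, that $\partial J^{+}[Y]\cap\partial J^{-}[Y]$ is a \emph{complete cross-section} of both congruences $\partial J^{\pm}[Y]$. Given this, Lemma~\ref{Completeness} shows that for \emph{any} acausal Cauchy slice $\Sigma$ containing $(\partial J^{+}[Y]\cap\partial J^{-}[Y])\cup\sigma$ one automatically has $J^{+}[Y]\cap\Sigma=J^{-}[Y]\cap\Sigma=C_{Y}$: a point of $\Sigma$ lying in, say, $J^{-}[Y^{+}]\setminus C_{Y}$ would have to be screened from the rest of $\Sigma$ by $\partial J^{-}[Y^{+}]\cap\Sigma$, which by completeness of the cross-section and acausality of $\Sigma$ is contained in $\partial C_{Y}$; this forces an open subset of $\Sigma$ to lie inside $\partial J^{-}[Y^{+}]$, a contradiction. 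So no ``adjustment'' of $E_{0}$ is needed at all --- the property holds automatically once $\Sigma\supset\partial C$ --- and you should replace your deformation step with this argument. A further small point: your verification that $C\cup E_{0}$ is a Cauchy slice tacitly uses $D[C]=D[C_{Y}^{(0)}]$ (so that an inextendible curve crossing one crosses the other exactly once); this follows because two acausal spanning surfaces of the same Cauchy-splitting edge have equal domains of dependence, but it should be stated rather than absorbed into ``by construction.''
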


\begin{thm}\label{Thm:Forwards2}
Let $Y$ satisfy the assumptions of Theorem~\ref{Thm:Forwards} and let $\Sigma$ be the bulk Cauchy slice whose existence is guaranteed by Theorem~\ref{Thm:Forwards}, i.e.
$$C_{Y}=J^{+}[Y]\cap J^{-}[Y]\cap\Sigma$$
satisfies (1) and (2) of Theorem~\ref{Thm:Forwards}. Define $Y_{\rm max}\equiv D[C_{Y}]\cap B$. Then there exists a choice of Cauchy slice $\widetilde{\Sigma}$ such that $$C_{Y_{\rm max}}\equiv J^{+}[Y_{\rm max}]\cap J^{-}[Y_{\rm max}]\cap \widetilde{\Sigma}$$ satisfies 
$$\partial C_{Y_{\rm max}}=\partial C_{Y}.$$
In particular, it is possible to pick $\Sigma=\widetilde{\Sigma}$ so that $C_{Y}$ still satifies (1) and (2) above and $C_{Y}=C_{Y_{\rm max}}$.
\end{thm}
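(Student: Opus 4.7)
My plan is to prove the stronger ``in particular'' assertion directly: show that $\widetilde{\Sigma}=\Sigma$ yields $C_{Y_{\rm max}}=C_Y$, so that $\partial C_{Y_{\rm max}}=\partial C_Y$ is immediate. The inclusion $C_Y\subseteq J^+[Y_{\rm max}]\cap J^-[Y_{\rm max}]\cap\Sigma$ is automatic from $Y\subseteq Y_{\rm max}$ (Theorem~\ref{Thm:Forwards}), so the work is all in the reverse inclusion.

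\textbf{Core argument.} The key inputs are causal convexity of the Cauchy development $D[C_Y]$ (a general property of domains of dependence of acausal sets) and acausality of $\Sigma$. Given $p\in J^+[Y_{\rm max}]\cap J^-[Y_{\rm max}]\cap\Sigma$, pick witnesses $q_\pm\in Y_{\rm max}\subseteq D[C_Y]$ with $q_-\preceq p\preceq q_+$. Causal convexity of $D[C_Y]$ forces the concatenated causal curve $q_-\to p\to q_+$ into $D[C_Y]$, so $p\in D[C_Y]$. Any $M\cup B$-inextendible extension of this curve must cross $C_Y\subseteq\Sigma$ at some $p^*$; since $p$ and $p^*$ lie on a common causal curve and both sit on the acausal slice $\Sigma$, $p=p^*\in C_Y$.

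\textbf{Verifying (1) and (2) for $Y_{\rm max}$.} Property (1), $D[C_Y]=C_Y''$, is inherited trivially. For (2), the same causal-convexity/acausality mechanism applied to $J^+$ or $J^-$ alone yields $J^\pm[Y_{\rm max}]\cap\Sigma=J^\pm[Y]\cap\Sigma$, which for spacelike $\Sigma$ transfers to $\partial J^\pm[Y_{\rm max}]\cap\Sigma=\partial J^\pm[Y]\cap\Sigma$. Intersecting, the on-$\Sigma$ parts of the codim-2 edges $\partial J^+[Y_{\rm max}]\cap\partial J^-[Y_{\rm max}]$ and $\partial J^+[Y]\cap\partial J^-[Y]$ coincide; the latter lies entirely on $\Sigma$ by Theorem~\ref{Thm:Forwards}(2) for $Y$. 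A natural Cauchy slice of $Y_{\rm max}$ is then $\sigma'=C_Y\cap B$: a $Y_{\rm max}$-inextendible causal curve in $B$ extends $B$-inextendibly (using separate global hyperbolicity and spatial compactness of $B$), hence $M\cup B$-inextendibly, and must cross $C_Y$; since $C_Y\cap B\subseteq Y_{\rm max}$ while the $B$-extension beyond the curve lies outside $Y_{\rm max}$, the crossing must occur on the original curve itself.

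\textbf{Main obstacle.} The delicate remaining step is to rule out off-$\Sigma$ bulk points of $\partial J^+[Y_{\rm max}]\cap\partial J^-[Y_{\rm max}]$: these are not obstructed by the one-sided trace arguments above, since enlarging $Y$ to $Y_{\rm max}$ can a priori push the null boundaries outward. I would attack this via a null-generator version of the core argument, tracking for each generator endpoint $a,b\in Y_{\rm max}$ which $D^\pm[C_Y]$ sector it lies in, extending the null geodesic inextendibly to its $C_Y$-crossing, and combining acausality of $\Sigma$ with the observation that $p\notin I^\pm[Y_{\rm max}]$ forces $p\notin I^\pm[Y]$ (by $Y\subseteq Y_{\rm max}$), which together with $p\in J^\pm[Y]$ gives $p\in\partial J^+[Y]\cap\partial J^-[Y]\subseteq\Sigma$. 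Several sub-cases (distinguished by whether $p$ is strictly future or past of $\Sigma$, and by which sector the endpoints occupy) arise, and I expect the cleanest route to use the fixed-point property of the max operation, $D[C_Y]\cap B=Y_{\rm max}$, to close the remaining cases.
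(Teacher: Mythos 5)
Your core argument is correct and in fact takes a cleaner, more direct route than the paper. The paper proves the two inclusions $J^{+}[Y_{\rm max}]\cap J^{-}[Y_{\rm max}]\subset D[C_{Y}]$ and $C_{Y}\subset J^{+}[Y_{\rm max}]\cap J^{-}[Y_{\rm max}]$, concludes $D[C_{Y}]=D[C_{Y_{\rm max}}]$ after invoking Lemmas~\ref{Lemma7} and~\ref{LemmaNew} for $Y_{\rm max}$ to build a suitable $\widetilde{\Sigma}$, and then reads off $\partial C_{Y}=\partial C_{Y_{\rm max}}$ from equality of edges. You instead prove the pointwise set equality $J^{+}[Y_{\rm max}]\cap J^{-}[Y_{\rm max}]\cap\Sigma=C_{Y}$ on the \emph{original} slice, using causal convexity of $D[C_{Y}]$ (which indeed follows from the paper's Lemma~\ref{DomainofDepLem}, since $J^{\pm}(p)\subset J^{\pm}(q_{\mp})\subset J[C_{Y}]$) together with acausality of $\Sigma$ to pin the crossing point $p^{*}$ to $p$ itself. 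This is stronger than what the paper extracts at this stage (set equality rather than equality of domains of dependence), it sidesteps re-verifying the hypotheses of the appendix lemmas for $Y_{\rm max}$, and it makes the ``in particular'' clause immediate since $C_{Y}$ and $\Sigma$ are literally unchanged.

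One clarification you should internalize: your ``Main obstacle'' paragraph is solving a problem the theorem does not pose. With the paper's convention that $\partial$ denotes the topological boundary in $M\cup B$, the equality of sets $C_{Y_{\rm max}}=C_{Y}$ already yields $\partial C_{Y_{\rm max}}=\partial C_{Y}$, and ``(1) and (2) above'' refer to $C_{Y}$'s properties relative to $Y$, which persist trivially when $\widetilde{\Sigma}=\Sigma$. Ruling out off-$\Sigma$ points of $\partial J^{+}[Y_{\rm max}]\cap\partial J^{-}[Y_{\rm max}]$ --- i.e.\ establishing Theorem~\ref{Thm:Forwards}(2) \emph{relative to $Y_{\rm max}$} --- is a genuinely separate (and useful, for Theorem 6) claim, but it is not required here, so its incompleteness does not leave a gap in your proof of this statement. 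The heuristic steps in your ``Verifying (1) and (2)'' paragraph (e.g.\ passing from $J^{\pm}[Y_{\rm max}]\cap\Sigma=J^{\pm}[Y]\cap\Sigma$ to equality of the traces of the null boundaries) would need more care if you did want that stronger conclusion, but as a proof of Theorem~\ref{Thm:Forwards2} as stated, your first two paragraphs suffice.
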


\begin{thm}\label{thm:minimalC}
    Let $\wt C$ be any closed bulk acausal hypersurface with boundary satisfying $D[\wt C]\cap B=Y$, where $Y$ is causally convex and compact in $M\cup B$. Then $D[C]\subset D[\wt C]$ where $C$ is any acausal bulk hypersurface satisfying
    $$\partial C=(\partial J^{+}[Y]\cap \partial J^{-}[Y])\cup \sigma$$
    and $\sigma$ is any Cauchy slice of $Y$. Moreover, $D[C]\cap B=Y$. 
\end{thm}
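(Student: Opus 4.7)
The plan is to first reduce to the canonical case using Theorem~\ref{thm:Cexists}: given the boundary condition $\partial C = (\partial J^+[Y] \cap \partial J^-[Y]) \cup \sigma$, that theorem supplies a bulk Cauchy slice $\Sigma$ with $C_Y = J^+[Y] \cap J^-[Y] \cap \Sigma$, $\partial C_Y = \partial C$, and $D[C] = D[C_Y]$. So without loss of generality $C$ lies in the bulk causal wedge $J^+[Y] \cap J^-[Y]$. The heart of the argument is then establishing $D[C] \subset D[\tilde C]$; the equality $D[C] \cap B = Y$ will follow quickly from this plus Theorem~\ref{Thm:Forwards}.

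The key observation driving $D[C] \subset D[\tilde C]$ is that $D[\tilde C]$ is causally convex in $M \cup B$. This is a general property of domains of dependence of acausal hypersurfaces and rests on the fact that a causal curve meets an acausal hypersurface at most once (otherwise its two intersection points would be causally related, contradicting acausality of $\tilde C$). Given causal convexity together with the hypothesis $Y \subset D[\tilde C]$, I would show $J^+[Y] \cap J^-[Y] \subset D[\tilde C]$ as follows: for any $p \in J^+[Y] \cap J^-[Y]$, pick $y_1, y_2 \in Y$ and form a future-directed causal curve $y_1 \to p \to y_2$ by concatenating the realizations of $p \in J^+[y_1]$ and $p \in J^-[y_2]$; both endpoints lie in $Y \subset D[\tilde C]$, so causal convexity forces the whole curve, and hence $p$, into $D[\tilde C]$. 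Combined with the reduction this gives $C \subset D[\tilde C]$. To conclude, for any $p \in D[C]$ and any inextendible causal curve $\gamma$ through $p$, $\gamma$ must hit $C$ at some $q \in D[\tilde C]$; but then $\gamma$ is also an inextendible causal curve through $q \in D[\tilde C]$, so it must cross $\tilde C$. Hence $p \in D[\tilde C]$.

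For the equality $D[C] \cap B = Y$, the inclusion $D[C] \cap B \subset Y$ is immediate from $D[C] \subset D[\tilde C]$ combined with the hypothesis $D[\tilde C] \cap B = Y$. The reverse inclusion is furnished by the last assertion of Theorem~\ref{Thm:Forwards}, namely $Y \subseteq D[C_Y] \cap B$, together with $D[C] = D[C_Y]$ from the reduction.

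The main technical hurdle I anticipate is establishing causal convexity of $D[\tilde C]$ rigorously in the asymptotically AdS setting, where $\tilde C$ may carry a nontrivial edge (potentially meeting $B$) and inextendible causal curves in $M \cup B$ need not be compactly terminated. While causal convexity of $D[\tilde C]$ is standard folklore for globally hyperbolic spacetimes, the proof proceeds by case analysis using acausality and the ``at most one crossing'' property, so the behavior near $B$ and near the edge of $\tilde C$ will demand careful verification. A secondary subtlety is that the hypothesis of Theorem~\ref{Thm:Forwards} (no bulk Cauchy slice lying in both $J^+[Y]$ and $J^-[Y]$) is not listed among the assumptions of the present theorem; I would justify its applicability from the existence of $\tilde C$ with $D[\tilde C] \cap B = Y \neq B$, since such a Cauchy slice would force $D[\tilde C]$ to intersect $B$ well beyond $Y$.
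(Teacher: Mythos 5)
Your proposal is correct and follows essentially the same route as the paper's proof: reduce to $D[C]=D[C_{Y}]$ via Theorem~\ref{thm:Cexists}, use $Y\subset D[\wt C]$ to place $J^{+}[Y]\cap J^{-}[Y]$ inside $D[\wt C]$, and obtain the boundary equality from Theorem~\ref{Thm:Forwards} together with the nesting of domains of dependence. Your explicit justification of the key inclusion via causal convexity of $D[\wt C]$ --- which follows at once from Lemma~\ref{DomainofDepLem}, since $J^{\pm}(p)\subset J(y_{1,2})\subset J[\wt C]$, thereby sidestepping your worries about the edge and the AdS boundary --- and your patch for the unstated no-Cauchy-slice hypothesis of Theorem~\ref{Thm:Forwards} are sound elaborations of the paper's terser wording.
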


\begin{thm} \label{thm:Envelope}
Let $X_{Y}$ be the timelike envelope\footnote{That is, $X_{Y}$ consists of those points $x\in M\cup C$ that are contained in a causal curve $\gamma\subset M\cup B$ between points $p,q\in Y$ such that $\gamma$ is fixed-endpoint causally homotopic to a curve that lies entirely in $Y$.} of $Y$, with $Y$ as in Theorem~\ref{Thm:Forwards}. Then:
\begin{enumerate}
    \item $X_{Y}\subset D[C_{Y}]$;
    \item $X_{Y}$ contains a Cauchy slice of $D[C_{Y}]$.
\end{enumerate} 
\end{thm}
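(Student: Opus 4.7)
The plan is to reduce both claims to causal convexity of $D[C_Y]$ together with a causal homotopy argument linking bulk causal curves to boundary causal curves in $Y$.

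For part (1), the key property is that $D[C_Y]$ is causally convex in $M\cup B$: since $C_Y$ is a closed acausal hypersurface by Theorem~\ref{Thm:Forwards}, its domain of dependence is globally hyperbolic as a separate manifold, which by the equivalence recorded in Sec.~1.1 is exactly causal convexity. Let $x\in X_Y$ lie on a causal curve $\gamma\subset M\cup B$ with endpoints $p,q\in Y$. By Theorem~\ref{Thm:Forwards}, $Y\subset D[C_Y]\cap B$, so $p,q\in D[C_Y]$; causal convexity then forces $\gamma\subset D[C_Y]$ and hence $x\in D[C_Y]$. Notice that this half of the theorem does not invoke the causal homotopy clause in the definition of $X_Y$.

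For part (2), the plan is to show that the relative interior of $C_Y$ lies in $X_Y$ and then to upgrade this to a full Cauchy slice of $D[C_Y]$ inside $X_Y$. Pick $x$ in the relative interior of $C_Y$. Since $x\notin \partial J^+[Y]\cup \partial J^-[Y]$, in fact $x\in I^+[Y]\cap I^-[Y]$, so there exist timelike curves $\alpha\colon p\to x$ and $\beta\colon x\to q$ with $p,q\in Y$; the concatenation $\gamma=\alpha\cdot\beta$ is a causal curve through $x$ joining $p$ and $q$. By the boundary causality assumption (Sec.~1.1), $p\in J^-(q)$ in $M\cup B$ implies $p\in J^-_B(q)$, and causal convexity of $Y$ in $B$ then yields a boundary causal curve $\gamma_B\subset Y$ from $p$ to $q$.

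The central step is to produce a fixed-endpoint causal homotopy between $\gamma$ and $\gamma_B$. Both curves lie in the compact, globally hyperbolic causal diamond $J^+(p)\cap J^-(q)\subset D[C_Y]$; in such a diamond any two causal curves with the same endpoints are linked by a causal homotopy, via a standard push-up / Avez--Seifert-type deformation, with path-connectedness following from contractibility of the diamond. This yields $x\in X_Y$, so the relative interior of $C_Y$ is contained in $X_Y$. Combined with $\sigma\subset Y\subset X_Y$ on one piece of $\partial C_Y$, and a small spacelike perturbation that pushes off the codimension-one caustic locus $\partial J^+[Y]\cap\partial J^-[Y]$ on the other piece, this produces a Cauchy slice of $D[C_Y]$ contained in $X_Y$ (using that $C_Y$ itself is a Cauchy slice of $D[C_Y]$ by its definition in Theorem~\ref{Thm:Forwards}), proving (2).

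The hardest part is the causal homotopy step: $\gamma$ lives in the bulk while $\gamma_B\subset B$, so the interpolating family of causal curves must sweep from the bulk all the way to the conformal boundary while remaining causal at every intermediate stage. Controlling this sweep relies essentially on the boundary causality assumption (to prevent bulk shortcuts from obstructing the homotopy) together with compactness and global hyperbolicity of $J^+(p)\cap J^-(q)$ to avoid pathologies near $B$.
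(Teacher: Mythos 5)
Your part (1) is correct and in fact takes a cleaner route than the paper: rather than first proving $X_{Y}=J^{+}[Y]\cap J^{-}[Y]$ and then invoking Lemma~\ref{lemma9}, you use causal convexity of $D[C_{Y}]$ together with $Y\subset D[C_{Y}]\cap B$ from Theorem~\ref{Thm:Forwards}. That convexity claim is legitimate and can be made airtight with the paper's own Lemma~\ref{DomainofDepLem}: if $x$ lies on a causal curve from $p$ to $q$ with $p,q\in D[C_{Y}]$, then $J^{+}(x)\subset J^{+}(p)$ and $J^{-}(x)\subset J^{-}(q)$, so $J(x)\subset J[C_{Y}]$ and hence $x\in D[C_{Y}]$. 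As you note, this does not use the homotopy clause in the definition of $X_{Y}$, and it proves the stronger statement $J^{+}[Y]\cap J^{-}[Y]\subset D[C_{Y}]$.

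Part (2), however, has a genuine gap at exactly the step you flag as hardest. The assertion that in $J^{+}(p)\cap J^{-}(q)$ ``any two causal curves with the same endpoints are linked by a causal homotopy, via a standard push-up / Avez--Seifert-type deformation, with path-connectedness following from contractibility of the diamond'' is false as a general statement and is not what those tools provide. Global hyperbolicity forces $M\cup B\cong \mathbb{R}\times\Sigma$ but places no restriction on the topology of $\Sigma$; a causal diamond anchored to the boundary can thread a handle, in which case a bulk causal curve from $p$ to $q$ need not even be \emph{homotopic} to a boundary curve, let alone causally homotopic, and the diamond is certainly not contractible. The Avez--Seifert theorem produces a maximizing geodesic, and push-up controls $I^{+}\circ J^{+}$; neither yields a fixed-endpoint homotopy through causal curves. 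The statement you need is precisely the AdS topological censorship theorem of Galloway--Schleich et al.\ (the paper's citation \cite{GalSch99}), which is a substantive theorem whose hypotheses include boundary causality; it cannot be replaced by a deformation argument you run by hand. Once that theorem is invoked you get the stronger identity $X_{Y}=J^{+}[Y]\cap J^{-}[Y]$, which also repairs the secondary looseness in your Cauchy-slice construction: you only place $\mathrm{int}[C_{Y}]\cup\sigma$ inside $X_{Y}$ and then appeal to a ``small spacelike perturbation'' to deal with the edge $\partial J^{+}[Y]\cap\partial J^{-}[Y]$, but any acausal surface with the same domain of dependence as $C_{Y}$ must contain that same edge, so perturbing does not remove the problematic locus. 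With topological censorship, points of the edge lie on broken null causal curves from $Y^{-}$ to $Y^{+}$ and are therefore already in $X_{Y}$, so all of $C_{Y}$ (which is a Cauchy slice of $D[C_{Y}]$ by construction) lies in $X_{Y}$ and claim (2) follows immediately.
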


We now give proofs of the above theorems. All lemmas quoted below are stated and proved in Appendix~\ref{app:lemmas}.

\subsection{Proofs of Theorems 1-5}

\textbf{Proof of Theorem 1.}

\begin{proof} Point (1) follows immediately from Lemma~\ref{CausalCompletionLem}, since $C_{Y}$ is achronal, compact, closed in $M\cup B$, and admits support on a single Cauchy slice. By Lemma~\ref{lem:New013}, there exists a Cauchy slice $\Sigma$ such that $C_{Y}=J^{+}[Y]\cap J^{-}[Y]\cap \Sigma$ satisfies $\partial C_{Y}=(\partial J^{+}[Y]\cap \partial J^{-}[Y])\cup \sigma$. This proves point (2). 

So we know that our $\Sigma$ contains $\partial C_{Y}=\partial J^{+}[Y]\cap \partial J^{-}[Y]=\partial J^{+}[Y^{-}]\cap \partial J^{-}[Y^{+}]\subset C=J^{+}[Y^{-}]\cap J^{+}[Y^{+}]\cap \Sigma$. Let $p\in Y^{+}$ and let $\gamma(\lambda)$ be any past-directed past-inextendible causal curve starting at $p=\gamma(1)$. Since $\gamma(1)\in Y^{+}\subset D_{B}^{+}[\sigma]$, $\gamma$ must eventually cross $\Sigma$, as it is a Cauchy slice that contains $\sigma$. Let $A\equiv \Sigma/C_{Y}$, where $C_{Y}$ is closed. Then $\gamma$ can cross $\Sigma$ either on $A$ or on $C_{Y}$. If every such $\gamma$ crosses $\Sigma$ on $C_{Y}$, then $p\in D^{+}[C_{Y}]$ $\forall p \in Y^{+}$, and thus $Y^{+}\subset D^{+}[J^{+}[Y]\cap J^{-}[Y]\cap \Sigma]\cap B$; similarly for $p\in Y^{-}$, and since we are working with $\Sigma$ where $\sigma \in C_{Y}$, $p\in \sigma$ are automatically in $C$. So if every $\gamma$ crosses $\Sigma$ on $C_{Y}$, then $Y\subset D[C_{Y}]$ and we are done. 

Suppose that $\gamma$ crosses $\Sigma$ at $A$ instead. This immediately implies that $J^{-}[Y^{+}]\cap \Sigma\neq C_{Y}$, in contradiction with Lemma~\ref{Completeness}. So every such $\gamma$ crosses $\Sigma$ at $C_{Y}$. This establishes that $Y\subseteq D[C_{Y}]\cap B.$
\end{proof}

\noindent \textbf{Proof of Theorem 2.}

\begin{proof}
Let $\Sigma$ be any Cauchy slice of $M\cup B$ containing $(\partial J^{+}[Y]\cap \partial J^{-}[Y])\cup \sigma$ (by Lemma~\ref{Lemma7} such a Cauchy slice exists). Since by Lemma~\ref{LemmaNew}, $\partial J^{+}[Y]\cap \partial J^{-}[Y]$ is a complete cross-section of $\p J^{\pm}[Y]$, and $\partial J^{+}[Y]\cap \partial J^{-}[Y]\cap B=\partial_{B}\sigma$ by Lemma~\ref{lemma3}, $\partial J^{+}[Y]\cap \partial J^{-}[Y]$ must be Cauchy splitting (both in $M$ and in $M\cup B$). This follows since $\partial J^{\pm}[Y]$ are causal Cauchy surfaces that are split by $\partial J^{+}[Y]\cap \partial J^{-}[Y]$; in  a globally hyperbolic spacetime the topology of Cauchy slices immediately yields that $\partial J^{+}[Y]\cap \partial J^{-}[Y]$ must also be Cauchy splitting for an acausal Cauchy slice. Define $\wt C$ to be the hypersurface bounded by $\sigma=Y\cap \Sigma$ and $\partial J^{+}[Y]\cap \partial J^{-}[Y]$. This exists because $\partial J^{+}[Y]\cap \partial J^{-}[Y]$ and $\sigma$ are homologous. Then the intersection $J^{+}[Y]\cap J^{-}[Y]\cap D[C]$ contains $\partial C\subset J^{+}[Y]\cap J^{-}[Y]$. Thus (by the Cauchy splitting property) there exists a Cauchy slice $\wt \Sigma$ such that $\wt C=C_{Y}=J^{+}[Y]\cap J^{-}[Y]\cap \wt \Sigma$, $ \wt C=\partial C$, and $D[C]=D[\wt C]$.

\end{proof}

\noindent \textbf{Proof of Theorem 3.}
\begin{proof} We first prove that (1) $J^{+}[Y_{\rm max}]\cap J^{-}[Y_{\rm max}]\subset D[C_{Y}]$, and then we prove that (2) $C_{Y}\subset J^{+}[Y_{\rm max}]\cap J^{-}[Y_{\rm max}]$. This immediately implies that $D[C_{Y}]=D[J^{+}[Y_{\rm max}]\cap J^{-}[Y_{\rm max}]]$. By Lemmas~\ref{Lemma7}~\ref{LemmaNew}, there exists a Cauchy slice $\wt \Sigma$ such that $D[J^{+}[Y_{\rm max}]\cap J^{-}[Y_{\rm max}]]=D[J^{+}[Y_{\rm max}]\cap J^{-}[Y_{\rm max}]\cap \wt \Sigma]=D[C_{Y_{\rm max}}]$. This establishes $D[C_{Y}]=D[C_{Y_{\rm max}}]$, and thus edge$[C_{Y}]={\rm edge}[C_{Y_{\rm max}}] $. In the notation used in this paper, this is precisely the statement that $\partial C_{Y}=\partial C_{Y_{\rm max}}$. That is, it is possible to pick $\Sigma$ and $\wt \Sigma$ such that $\partial C_{Y}=\partial C_{Y_{\rm max}}$. Let $\wt C_{Y}=J^{+}[Y]\cap J^{-}[Y]\cap \wt \Sigma$. Because $\partial C_{Y}\subset \wt \Sigma$, and $\partial C_{Y}\cap M=\partial J^{+}[Y]\cap \partial J^{-}[Y]$, $\partial C_{Y}\subset \wt C_{Y}$. Because $\partial J^{+}[Y]\cap \partial J^{-}[Y]$ is a complete cross-section of the $J^{\pm}[Y]$ congruences by Lemma~\ref{LemmaNew}, $\partial C_{Y}=\partial \wt C_{Y}$. Thus $\wt C_{Y}$ satisfies the properties in Theorem 1.

We now prove (1): that $J^{+}[Y_{\rm max}]\cap J^{-}[Y_{\rm max}]\subset D[C_{Y}]$. By contradiction. Suppose that $J^{+}[Y_{\rm max}]\cap J^{-}[Y_{\rm max}]$ were not in $D[C_{Y}]$. Then there would exist a past- and future-inextendible causal curve $\gamma$ through $J^{+}[Y_{\rm max}]\cap J^{-}[Y_{\rm max}]$ that never intersects $C_{Y}$. Let $\Sigma$ be a bulk Cauchy slice as above containing $C_Y$, and let $p=\gamma\cap \Sigma$. Then $p\notin C_Y$. Because $p\in J^{+}[Y_{\rm max}]\cap J^{-}[Y_{\rm max}]$, there exists a causal curve from $Y_{\rm max}$ to $p$. But then there exists a causal curve from $Y_{\rm max}$ that never crosses $C_Y$, in contradiction with $Y_{\rm max}\subset D[C_Y]$. This proves (1). 

To prove (2), recall that $C_Y=J^{-}[Y]\cap J^{+}[Y]\cap \Sigma$, where $Y\subset Y_{\rm max}$. Since $Y\subset Y_{\rm max}$, $J^{+}[Y]\cap J^{-}[Y]\subset J^{+}[Y_{\rm max}]\cap J^{-}[Y_{\rm max}]$. Thus $C_{Y}\subset J^{+}[Y_{\rm max}]\cap J^{-}[Y_{\rm max}]$. This completes the proof. 

\end{proof}

\noindent \textbf{Proof of Theorem 4.}

\begin{proof}
By Theorem 2, any $C$ satisfying the condition in the theorem will satisfy $D[C]=D[C_{Y}]$ for some $C_{Y}$ defined on some Cauchy slice $\Sigma$, where as usual $C_{Y}\subset J^{+}[Y]\cap J^{-}[Y]$. Since $Y\subset D[{\wt C}]$, by causal convexity of $Y$ and boundary causality, $J^{+}[Y]\cap J^{-}[Y]\subset D[{\wt C}]$. Thus we find $D[C]\subset D[C_{Y}]\subset D[{\wt C}]$.

    To see that $D[C]\cap B=Y$, recall that for any $C$ satisfying Eq.~\ref{eq:Ceqn}, Theorem 1 guarantees that $Y\subset D[C]\cap B$, where $C=J^{+}[Y]\cap J^{-}[Y]\cap \Sigma$ as in Theorem 1. And by the earlier part of the proof, $D[C]\cap B\subset D[\wt C]\cap B=Y$. So $D[C]\cap B=Y$.
\end{proof}

\noindent \textbf{Proof of Theorem 5.}

\begin{proof}
We first establish that $X_{Y}=J^{+}[Y]\cap J^{-}[Y]$. By definition, $X_{Y}\subset J^{+}[Y]\cap J^{-}[Y]$.  Now let $p\in J^{+}[Y]\cap J^{-}[Y]$, and let  $\gamma$ be a causal curve through $p$ with endpoints in $Y$. By AdS topological censorship~\cite{GalSch99} and boundary causality, every such $\gamma$ is fixed-endpoint causally homotopic to a boundary curve, and because $Y$ is causally convex by definition, this boundary curve lies entirely in $Y$. Thus $p\in X_{Y}$ so $J^{+}[Y]\cap J^{-}[Y]\subset X_{Y}$. We thus find that $J^{+}[Y]\cap J^{-}[Y]= X_{Y}$.

We now proceed to show (1): $X_{Y}\subset D[C_{Y}]$. Let $\gamma$ again be a bulk causal curve with endpoints in $Y$. By lemma~\ref{lemma9}, $\gamma$ always intersects $C_{Y}$ (recall that $\sigma\subset C_{Y}$ by definition). Thus for any $p\in J^{+}[Y]\cap J^{-}[Y]$, every inextendible causal curve through $p$ intersects $C_{Y}$; thus $p\in D[C_{Y}]$, so $J^{+}[Y]\cap J^{-}[Y]\subset D[C_{Y}]$; this is simply the statement that the causal wedge is contained within its domain of dependence. 

Claim (2) follows by definition of $C_{Y}$.
\end{proof}

\section{Bulk dual of von Neumann Algebra State Dependence}

In this section we generalize causal wedge reconstruction to a general globally hyperbolic boundary submanifold $Y_{\rm max}$ (i.e. satisfying the max property $Y=D[C_{Y}]\cap B$.

For a boundary spacetime region $Y$, the extrapolate dictionary states that
\be \label{Ex}
 \sM_Y = \sY_Y
\ee
where as before $\sY_Y$ denotes the boundary algebra generated by single-trace operators in $Y$, and $ \sM_U$ denotes the bulk operator algebra in the region $U$.

Below we establish that for $Y=Y_{\rm max}$, ${\cal M}_Y= {\cal M}_{D[C_{Y}]}$, where $C_{Y}$ is the minimal bulk region, as defined in Theorem 3, that satisfies $D[C_{Y}]\cap B=Y$. In this case, the algebra is von Neumann. For $Y\subsetneq Y_{\rm max}$, the boundary subalgebra will be a strict subset of the bulk subalgebra, and the former will not be a von Neumann algebra. This, together with the extrapolate dictionary~\eqref{Ex} establishes our main result: the sector-dependence of boundary von Neumann algebras is reflected in geodesic focusing in the bulk. 

\begin{thm}
    Let $Y\subset B$ be a compact, closed, causally convex region with Cauchy slice $\sigma$ such that $J^{+}[Y]\cap J^{-}[Y]$ does not contain a complete Cauchy slice of $M\cup B$. If $Y=D[C_{Y}]\cap B$, where $C_{Y}=J^{+}[Y]\cap J^{-}[Y]\cap \Sigma$ for some bulk Cauchy slice $\Sigma$ chosen so that $\partial C_{Y}=\partial J^{+}[Y]\cap \partial J^{-}[Y]\cup \sigma$, as guaranteed by Theorem~\ref{Thm:Forwards}, then ${\cal Y}_{Y}$ is a von Neumann algebra and ${\cal Y}_{Y}={\cal M}_{D[C_{Y}]}$. 

    Conversely, if ${\cal Y}_{Y}$ is a von Neumann algebra, then there exists a bulk hypersurface $C$ such that $D[C]\cap B=Y$ and the minimal such $C$ satisfies $\partial C=\partial J^{+}[Y]\cap \partial J^{-}[Y]\cup \sigma$. In this case, ${\cal Y}_{Y}={\cal M}_{D[C_{Y}]}$.
\end{thm}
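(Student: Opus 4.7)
The plan is to prove the two directions of the biconditional separately, relying on the causal-structure results of Theorems~\ref{Thm:Forwards}--\ref{thm:Envelope} together with three standard ingredients in the infinite-$N$ (generalized free) limit: the extrapolate dictionary~\eqref{Ex}, the bulk timelike tube theorem, and the time-slice axiom for the free bulk QFT.

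For the forward direction, assume $Y=D[C_{Y}]\cap B$. The first step is to apply the extrapolate dictionary $\sY_{Y}=\sM_{Y}$, reducing the problem to bulk algebras. By Theorem~\ref{thm:Envelope}, the bulk timelike envelope $X_{Y}=J^{+}[Y]\cap J^{-}[Y]$ satisfies $X_{Y}\subset D[C_{Y}]$ and contains a Cauchy slice of $D[C_{Y}]$. Applying the bulk timelike tube theorem yields $\sM_{Y}=\sM_{X_{Y}}$, and the time-slice axiom upgrades this to $\sM_{X_{Y}}=\sM_{D[C_{Y}]}$. Chaining these identifications gives $\sY_{Y}=\sM_{D[C_{Y}]}$, which is a von Neumann algebra since $D[C_{Y}]$ is a bulk domain of dependence, simultaneously establishing both claims.

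For the converse I would argue by contrapositive. Theorem~\ref{Thm:Forwards} always gives $Y\subseteq Y_{\max}:=D[C_{Y}]\cap B$; suppose the inclusion is strict. Applying the forward direction to $Y_{\max}$ (which does satisfy the max condition, using Theorem~\ref{Thm:Forwards2} and the identity $D[C_{Y_{\max}}]=D[C_{Y}]$) gives $\sY_{Y_{\max}}=\sM_{D[C_{Y}]}$. The same timelike-tube-plus-time-slice chain, run on $Y$ itself but now at the level of the double commutant (which is all that is controlled without the max property), yields $\sY_{Y}''=\sM_{D[C_{Y}]}=\sY_{Y_{\max}}$. Because $Y\subsetneq Y_{\max}$ and at infinite $N$ the single-trace sector is a generalized free field, smeared single-trace operators localized in $Y_{\max}\setminus Y$ cannot be written as polynomials in single-trace smearings supported in $Y$; hence $\sY_{Y}\subsetneq\sY_{Y_{\max}}=\sY_{Y}''$, contradicting the hypothesis that $\sY_{Y}$ is von Neumann. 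This forces $Y=Y_{\max}$, at which point Theorem~\ref{thm:minimalC} supplies the minimal $C$ with $D[C]\cap B=Y$ and $\partial C=(\partial J^{+}[Y]\cap \partial J^{-}[Y])\cup\sigma$, and the algebra identity $\sY_{Y}=\sM_{D[C_{Y}]}$ is furnished by the forward direction applied to $Y$.

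The main technical obstacle I anticipate is the clean deployment of the timelike tube theorem in the asymptotically AdS setting: the standard formulation is for open subsets of a globally hyperbolic spacetime, so one must carefully interpret $\sM_{Y}$ for $Y\subset B$ as the bulk algebra of a suitable tubular neighborhood of $Y$ via the extrapolate dictionary, and then verify that Theorem~\ref{thm:Envelope} identifies the envelope of this tube with $X_{Y}$. A secondary but conceptually important point, needed in the converse, is the strict containment $\sY_{Y}\subsetneq\sY_{Y_{\max}}$ before any weak closure is taken; this is precisely where the generalized-free structure of single-trace operators at infinite $N$ is essential, ruling out any algebraic generation of single-trace smearings outside $Y$ from those inside $Y$.
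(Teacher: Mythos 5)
Your overall strategy (extrapolate dictionary, then bulk timelike tube theorem plus time-slice axiom, then the causal-structure theorems) is the same as the paper's, and your converse-by-contrapositive is a legitimate, arguably cleaner reorganization of the paper's converse: the paper instead argues directly from $\sY_Y=\sY_Y''$ that $\sM_Y=\sM_Y''=\sM_{X_Y}=\sM_{D[C_Y]}$ and invokes Theorem~\ref{thm:minimalC} for the minimal $C$, while you isolate the strict inclusion $\sY_Y\subsetneq\sY_{Y_{\max}}$, which under the paper's (single-trace) notion of ``von Neumann'' is essentially definitional once $Y\subsetneq Y_{\max}$.

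However, your forward direction has a genuine gap: the chain $\sY_Y=\sM_Y=\sM_{X_Y}=\sM_{D[C_Y]}$ never uses the hypothesis $Y=D[C_Y]\cap B$, so if it were valid it would prove that \emph{every} causally convex $Y$ supports a von Neumann algebra --- contradicting the paper's Corollary and the whole point of the sector-dependence. The faulty step is ``the timelike tube theorem yields $\sM_Y=\sM_{X_Y}$'': the tube theorem only gives $\sM_Y''=\sM_{X_Y}$ (as you yourself note in the converse), so the chain controls only the double commutant, $\sM_Y''=\sM_{X_Y}=\sM_{D[C_Y]}$. The missing step --- and the only place the max hypothesis enters --- is the argument that $\sM_{D[C_Y]}$ contains no single-trace operators localized outside $D[C_Y]\cap B=Y$, so that the double commutant introduces no new geometric elements and $\sY_Y$ is von Neumann in the paper's sense. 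This is exactly the paper's $\hat Y$ argument: any boundary region $\hat Y$ with $\sM_{\hat Y}=\sM_{D[C_Y]}$ must satisfy $Y\subseteq\hat Y\subseteq D[C_Y]\cap B=Y$, forcing $\sM_{D[C_Y]}=\sM_Y=\sM_Y''$. Inserting that step (and replacing $\sM_Y=\sM_{X_Y}$ by $\sM_Y''=\sM_{X_Y}$ throughout the forward direction) repairs the proof; the rest of your argument then goes through.
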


\begin{proof}

  Consider a region $\tilde Y$ described in Theorem 1 which satisfies $C_{\tilde Y} = C_{Y}$ (Theorem 3). 
By satisfying $D(C_Y) \cap B = Y$, $Y$ stated in the description of the theorem is  the $\tilde Y_{\rm max}$ among all such $\tilde Y$'s. 
    By Lemma 6. $D[C_{Y}]=C_{\tilde Y}''$. Thus ${\cal M}_{D[C_{Y}]}$ is a von Neumann algebra (assuming as we do throughout this paper that reflecting boundary conditions at $\mathscr{I}$ are provided). By Theorem 2, $X_{\tilde Y}$ contains a Cauchy slice of $D[C_{Y}]$. By the bulk time slice axiom, 
    \be\label{y1}
    {\cal M}_{D[C_{Y}]}={\cal M}_{X_{\tilde Y}}\ .
    \ee
    By the bulk timelike tube theorem and Theorem~\ref{thm:Envelope}, 
    \be\label{y2}
    {\cal M}_{X_{\tilde Y}}={\cal M}_{\tilde Y}'' \ .
    \ee
  Suppose there exists a region $\hat Y \supseteq \tilde Y$ such that $\sM_{\hat Y} = {\cal M}_{\tilde Y}''  = {\cal M}_{D[C_{Y}]}$. 
  Then we must have $\hat Y \subseteq D[C_Y] \cap B = Y$, as otherwise we cannot have  $\sM_{\hat Y} = {\cal M}_{D[C_{Y}]}$. 
 Since $Y$ is also a $\tilde Y$, we must have $Y \subseteq \hat Y$. We then conclude that $\hat Y = Y$ and 
 \be 
 {\cal M}_{D[C_{Y}]} = \sM_Y = \sM_Y'' \ .
 \ee
 By the extrapolate dictionary:
    \be
    {\cal M}_{Y}={\cal Y}_{Y} \ .
    \ee
    Thus ${\cal M}_{D[C_{Y}]}={\cal Y}_{Y}$, which is a von Neumann algebra. Thus establishes the forwards direction.

    The converse is as follows: suppose that ${\cal Y}_{Y}$ is a von Neumann algebra. Then ${\cal Y}_{Y}''={\cal Y}_{Y}$. By the extrapolate dictionary, the bulk timelike tube theorem, the bulk time slice axiom, and Theorem~\ref{thm:Envelope}:
    $${\cal Y}_{Y}={\cal M}_{Y}={\cal M}_{Y}''={\cal M}_{X_{Y}}={\cal M}_{D[C_{Y}]},$$
    where $C_{Y}$ is the hypersurface satisfying $\partial C_{Y}=\partial J^{+}[Y]\cap \partial J^{-}[Y]\cup \sigma$ guaranteed by Theorem~\ref{thm:minimalC}.
\end{proof}

We obtain an immediate corollary:
\begin{cor}
    If $Y\neq Y_{\rm max}$, then $\wt {\cal M}_{Y} \subsetneq \wt {\cal M}_{D[C_{Y}]}$ for any $C_{Y}$ satisfying $\partial C_{Y}=\partial J^{+}[Y]\cap \partial J^{-}[Y]\cap \sigma$; in particular, ${\cal Y}_{Y}$ is not a von Neumann algebra.
\end{cor}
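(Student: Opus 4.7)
The plan is to derive the corollary as essentially the contrapositive of the main theorem, using Theorem 3 to handle the bookkeeping of causal completions. The only genuinely new input beyond the just-proved theorem is the hypothesis $Y\neq Y_{\rm max}$, which (together with $Y\subseteq Y_{\rm max}=D[C_Y]\cap B$ from Theorem 1) yields the strict inclusion $Y\subsetneq Y_{\rm max}$.

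First I would apply the forward direction of the main theorem to $Y_{\rm max}$ itself. By Theorem~\ref{Thm:Forwards2}, the Cauchy slice $\Sigma$ can be chosen so that $C_{Y_{\rm max}}=C_Y$, and by Theorem~\ref{thm:minimalC} together with the definition $Y_{\rm max}=D[C_Y]\cap B$, the region $Y_{\rm max}$ satisfies the hypothesis $Y_{\rm max}=D[C_{Y_{\rm max}}]\cap B$ of the theorem. Hence $\sY_{Y_{\rm max}}=\sM_{D[C_Y]}$ and, in particular, $\sM_{D[C_Y]}$ is a von Neumann algebra.

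Next I would argue the strict inclusion $\sM_Y\subsetneq \sM_{D[C_Y]}$ by contradiction. Suppose $\sM_Y=\sM_{D[C_Y]}$. Applying the extrapolate dictionary $\sM_Y=\sY_Y$, this forces $\sY_Y=\sM_{D[C_Y]}$, which is von Neumann by the previous step; hence $\sY_Y$ is a von Neumann algebra. But then the converse direction of the main theorem applies and produces a bulk hypersurface $C$ with $D[C]\cap B=Y$ whose minimal representative satisfies $\partial C=\partial J^{+}[Y]\cap\partial J^{-}[Y]\cup\sigma$. By Theorem~\ref{thm:minimalC}, the minimal such $C$ is exactly the $C_Y$ in our hypothesis, so $Y=D[C_Y]\cap B=Y_{\rm max}$, contradicting the assumption $Y\neq Y_{\rm max}$. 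This gives the strict inclusion. The final clause of the corollary, that $\sY_Y$ is not von Neumann, follows immediately: if it were, the same invocation of the converse would again force $Y=Y_{\rm max}$.

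I don't anticipate a serious obstacle here, since the corollary is essentially a logical repackaging of the converse direction of the main theorem together with extrapolate. The only mildly delicate point is to be careful that when we apply the converse to deduce $Y=Y_{\rm max}$, we correctly identify the hypersurface it produces with the $C_Y$ appearing in the corollary's hypothesis—this is where Theorems~\ref{Thm:Forwards2} and~\ref{thm:minimalC} do the real work, matching the minimal $C$ to the one built from $\partial J^{+}[Y]\cap\partial J^{-}[Y]\cup\sigma$ and ensuring there is no ambiguity in the identification $D[C_{Y_{\rm max}}]=D[C_Y]$.
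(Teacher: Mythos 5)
Your proposal is correct and matches the paper's intent: the paper states this corollary as an immediate consequence of Theorem~6, and your argument is precisely the natural unpacking --- the strict inclusion $Y\subsetneq Y_{\rm max}$ from Theorem~\ref{Thm:Forwards}, the fact that $\sM_{D[C_Y]}$ is von Neumann, and the contrapositive of the converse direction (via Theorems~\ref{Thm:Forwards2} and~\ref{thm:minimalC} to identify the minimal $C$ with $C_Y$). The only cosmetic remark is that the von Neumann property of $\sM_{D[C_Y]}$ follows directly from Lemma~\ref{CausalCompletionLem} and bulk Haag duality, so invoking the forward direction of Theorem~6 for $Y_{\rm max}$ (and verifying its hypotheses for $Y_{\rm max}$) is slightly more machinery than needed, but not incorrect.
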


\section{Discussion: A Finite-\textit{N} Interpretation }

In this article, we provided a rigorous derivation of the reconstruction of generalized causal wedge algebras for general boundary regions whose single trace algebras at infinite-$N$ are von Neumann algebras. Our derivation identifies the formation of caustics along null congruences as the bulk feature that is dual to the GNS sector-dependence of von Neumann algebras of arbitrary regions at large-$N$. Given a boundary region $Y$, we proved that corresponding region $Y_{\rm max}$, obtained by firing null congruences from $Y$ into the bulk, and then firing null congruences back towards the boundary, supports a von Neumann algebra ${\cal Y}_{Y_{\rm max}}$. This algebra, by the timelike tube theorem and the extrapolate dictionary, is identical to the bulk algebra on $(J^{+}[Y]\cap J^{-}[Y])''$. Furthermore, we conversely proved that the minimal bulk domain of dependence containing $J^{+}[Y_{\rm max}]\cap J^{-}[Y_{\rm max}]$ has the same algebra as $Y_{\rm max}$. 

For technical reasons, our proofs have assumed for convenience that $Y$ is causally convex. We expect that they generalize in a fairy straightforward fashion to causally concave boundary regions. The region $Y_{\rm max}$ is likely still $D[C_{Y}]\cap B$ in this case, with the difference that causally concave regions $Y$ are necessarily proper subsets of $Y_{\rm max}$.

We now proceed to discuss possible \textit{finite}-$N$ extensions of our work. 

\subsection{Speculation on the Duality at finite-\textit{N}}
Consider the boundary theory in a {\it pure}  state $\ket{\Psi}$ dual to a semiclassical bulk spacetime $M$. 
We denote the GNS Hilbert space associated with $\ket{\Psi}$ in the large-$N$ limit as $\sH_\Psi^{(\rm GNS)}$ 
and the finite-$N$ Hilbert space of the boundary theory as $\sH_{\rm CFT}$ (with $N$-dependence suppressed).

Consider the bulk dual region $C_{Y}$ of a boundary globally hyperbolic submanifold $Y$, with 
\be 
\p C_{Y} = \partial J^{+}[Y]\cap \partial J^{-}[Y] \cup \sig \ .
\ee
In the large $N$ limit, for $Y=Y_{\rm max}$ we have the identification 
\be 
\sY_Y =  \sM_{C_{Y}} , \quad \sY_Y' =  \sM_{\overline{C_{Y}}} \ , 
\ee
where $\overline{C_{Y}}$ denotes the complement of $C_{Y}$ on a bulk Cauchy slice $\Sig$. We emphasize that this is the commutant of the algebra rather than the algebra of the causal complement (the prime is applied to ${\cal Y}$ rather than to $Y$). $\sY_Y$ and its commutant $\sY_Y'$ act on $\sH_\Psi^{(\rm GNS)}$. 

What happens at finite $N$? We give evidence below that there exists a novel finite-$N$ extension $\sB_Y$ of ${\cal Y}_{Y}$ which is type $I$ and acts on ${\cal H}_{\rm CFT}$ such that\footnote{A version of the argument below was also used recently in~\cite{LeuLiu25}.} 
\be \label{id1}
\lim_{N \to \infty} S_{\sB_Y} 
= S_{\rm gen} [C_{Y}]  = S_{\rm gen} [\overline{C_{Y}}]  \ .
\ee
$S_\sA$ denotes the entropy associated with an algebra $\sA$. 

In~\eqref{id1},  
\be \label{gen0}
S_{\rm gen} [C_{Y}] = \frac{{\rm Area} (\alpha_{Y}) }{ 4 G_N} +  S_{{\cal M}_{D[C_{Y}]}} , 
\ee
where $ S_{{\cal M}}$ is the bulk entropy in the bulk algebra ${\cal M}$, and we have abbreviated $\partial J^{+}[Y]\cap \partial J^{-}[Y]\equiv \alpha_{Y}$. Equation~\eqref{id1} can be interpreted as giving the leading terms in the $1/N$ expansion of $S_{\sB_Y}$. This is intended to be a statement about convergence: i.e. our proposal is that there is a finite-$N$ algebra $\sB_Y$ that at large-$N$ converges to~\eqref{id1}.
Note that $\sY_Y$ is $\ket{\Psi}$-dependent, and so is $\sB_Y$. 

What kind of ``novel'' extension must $\sB_{Y}$ be? The obvious choice is simply the set of bounded operators localized on $Y$ at finite-$N$. However, it is easy to see that this cannot be the correct extension by considering a time band shape for $Y$. In this case, this choice of extension simply yields ${\cal B}({\cal H}_{\rm CFT})$: the full boundary algebra, whose corresponding entropy is zero. We now argue that another, more natural extension $\sB_Y$ must exist. 



Consider first the bulk algebra $ \sM_{C_{Y}}$  in the $G_N \to 0$ limit. 
It is type III$_1$ in the continuum: an entropy cannot be defined. One way of overcoming this obstacle is via the introduction of a bulk short-distance cutoff $\ep$, which turns the algebra into type I, denoted $S_{ \sM_{C_{Y}}^\ep}$. The specific form of the cutoff is not important (e.g., a lattice regularization).  The entropy $ S_{ \sM_{C_{Y}}^\ep}$ has the form 
\be \label{div1}
S_{ \sM_{C_{Y}}^\ep} = a  \frac{{\rm Area} (\al_Y) }{ \ep^{d-1}} + \cdots, 
\ee
where $a$ is some constant and $\cdots$ denotes less divergent and finite terms in the limit $\ep \to 0$.

From subregion-subalgebra duality, we expect that for $Y=D[C_{Y}]\cap B$
\be \label{sub1}
 \sM_{C_{Y}}^\ep = \sY_Y^\ep, 
\ee
that is, the duality at strictly infinite-$N$ suggests the existence of a regularization on the boundary in the $N \to \infty$ limit that turns $\sY_Y$ into a type I algebra $\sY_Y^\ep$. While  we do not currently know how to describe the regularization explicitly for a general boundary theory, the duality implies that it should exist.  Note that this stage $\ep$ is $G_N$-independent (as we have already taken $G_N \to 0$ limit).

Now consider taking $N$ to be large but finite, or equivalently $G_{N}$ to be finite but small. 
On the gravity side, in the low-energy effective field theory, the coupling $G_N$ is renormalized~(with bare coupling $G_N (\ep)$), and the right hand side of~\eqref{gen0} can be written more precisely as  
\be \label{gen2} 
S_{\rm gen} [C_{Y}] = \lim_{\ep \to 0} \left( \frac{{\rm Area} (\al_Y) }{ 4 G_N (\ep)} +  S_{ \sM_{C_{Y}}^{\ep}} \right) \ .
\ee
It is generally believed that the limit is well defined even though each term on the right hand side cannot be individually defined in the $\ep \to 0$ limit (see e.g.~\cite{SusUgl94, Jac94, LarWil95, Sol11, CooLut13, BouFis15a} for a small selection of works on the subject). 

While equation~\eqref{gen2} can be formally defined in the low-energy effective field theory, 
when $\ep \sim \ell_p$, it is more sensible to interpret the equation in the full quantum gravitational theory. 
In this regime, it is no longer possible to have a clean separation between the $1/G_N$ term in~\eqref{gen2} and the term~\eqref{div1} in $ S_{ \sM_{C_{Y}}^{\ep}}$. Therefore, now it is more sensible to interpret the total contribution $S_{\rm gen} [C_{Y}]$ as the entropy for $ \sM_{C_{Y}}^{\ep}$. In other words, as we decrease $\ep$ all the way to the Planck length $\ell_p$, we expect throughout the process, there exists a type I algebra $ \sM_{C_{Y}}^{\ep}$, and
\be \label{eq:final}
S_{\sM_{C_{Y}^{\ep}} }= S_{\rm gen} [C_{Y}] , \quad \ep \sim \ell_p  \ .
\ee

It is natural to expect the identification~\eqref{sub1} can be similarly extended to this regime, i.e., there exists a finite-$N$ extension of type I algebra $\sY_Y^\ep$, denoted $\sB_Y = \wt \sM_{C_{Y}}^{\ep}$ with $\ep \sim \ell_p$. 
This is our motivation for proposing equation~\eqref{id1}. 

For finite but small $G_N$, $ \sM_{C_{Y}}^{\ep}$ can still be associated with the (approximate) bulk subregion $C_{Y}$. At this level, $C_{Y}$ can only be defined up to quantum gravitational fluctuations, but it is possible that $ \sM_{C_{Y}}^{\ep}$ may still be sharply defined. Despite quantum uncertainties, $D[C_{Y}]$ clearly does not cover the full bulk, and thus  $ \sM_{C_{Y}}^{\ep} = \sB_Y$ cannot be the full algebra. 

We expect that an independent characterization of ${\cal B}_{Y}$ would result in numerous insights from~\eqref{id1}. Here we will highlight one particularly important implication: a nonperturbative understanding of the area term in the generalized second law~\cite{Haw71, Bek72}. Let us briefly remind the reader of the relevant statement: in a classical spacetime satisfying the null energy condition and strong asymptotic predictability, the cross-sectional area of an event horizon generically increases with time~\cite{Haw71}. While successful holographic descriptions of the generalized second law in quantum gravity have been confined to stationary or piecewise stationary horizons, \textit{generic} black hole horizons experience continuous leading order area growth in finite time. Quantitatively, if ${\cal H}^{+}$ is a future event horizon and $\Sigma$ is a Cauchy slice, then the area of ${\cal H}^{+}\cap \Sigma$ is generically strictly smaller than the area of ${\cal H}^{+}\cap \Sigma'$ for any future deformation $\Sigma'$ of $\Sigma$. Under the inclusion of quantum effects, this statement turns into the generalized second law, which has been derived algebraically using algebra inclusions~\cite{ChaPen22a} in the non-generic case in which classical excitations are separated by times longer than the thermal time. The generalized second law has also been hypothesized to have some thermodynamic origin in the dual CFT, although obvious versions of this have been debunked~\cite{EngWal17a}; more fundamentally, the teleological nature of the event horizon likely rules out a naive thermodynamic CFT description.

\begin{figure}
    \centering
    \includegraphics[width=0.5\linewidth]{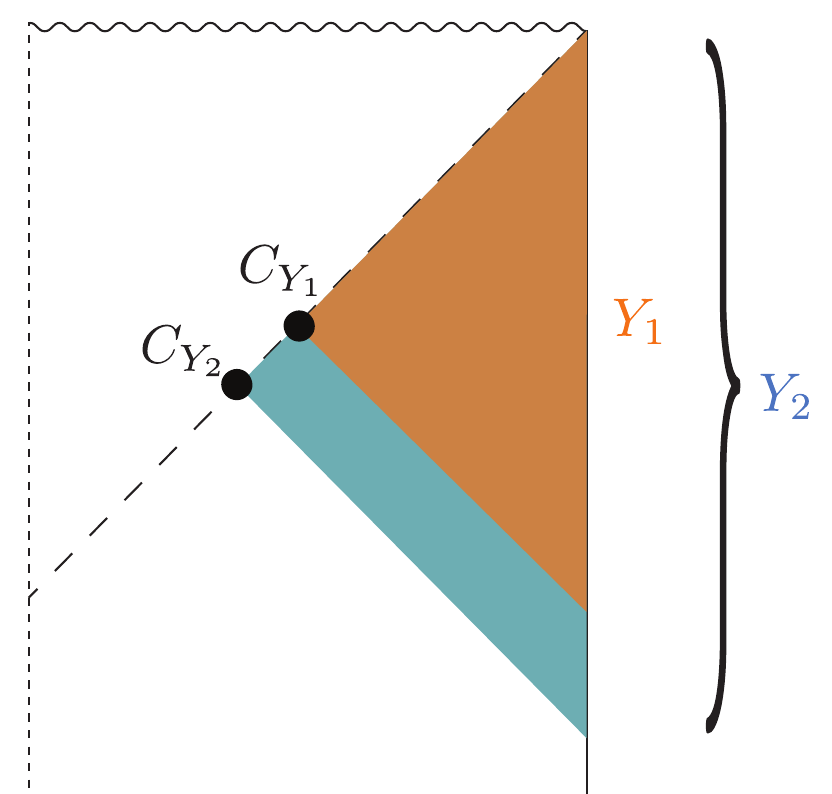}
    \caption{The area law dictates that in generic cases, the area monotonically increase from $C_{Y_{2}}$ to $C_{Y_{1}}$ even if they agree on all but a single generator. The nesting of $Y_{1}$ and $Y_{2}$ captures this whenever $Y_{1}$ are $Y_{2}$ chosen to satisfy the max property $Y=D[C_{Y}]\cap B$.}
    \label{fig:SecondLaw}
\end{figure}

The universality of the area law in gravity under very weak assumptions suggests that the underlying mechanism enforcing it is essential to the emergence of spacetime. Since the area term contributes to $S_{\rm gen}$ at ${\cal O}(G_{N}^{-1})$, this requires a treatment at finite $G_{N}$. To this end, let us apply our tentative proposal above: let $Y_{1}$ and $Y_{2}$ be two semi-infinite nested boundary time bands. Both bands extend infinitely in the future towards $i^{+}$, but end at finite times so that $Y_{1}\subset Y_{2}$. See Fig.~\ref{fig:SecondLaw}. Consider now $C_{Y_{1}}$ and $C_{Y_{2}}$; by construction both $C_{Y_{1}}$ and $C_{Y_{2}}$ will be cuts of the future event horizon, with $C_{Y_{1}}$ to the causal future of $C_{Y_{2}}$. Neither $Y_{1}$ nor $Y_{2}$ are guaranteed to be maximal in the sense of Theorem~\ref{Thm:Forwards}, so we construct the maximal corresponding boundary regions $Y_{1,\rm max}$ and $Y_{2,\rm max}$. Because $C_{Y_{1}}$ and $C_{Y_{2}}$ are causally-separated, the maximal regions should not coincide; in particular, $Y_{1, \rm max}\subset Y_{2,{\rm max}}$. This immediately implies that the corresponding type I algebras nest:
$$\sB_{Y_{1}} \subset \sB_{Y_{2}},$$
and as a result the corresponding entropies are monotonic:
$$S_{\sB_{Y_{1}}}> S_{\sB_{Y_{2}}}.$$
Invoking~\eqref{eq:final} and the proposed equivalence ${\cal B}_{Y}=\widetilde{\sM}^{\epsilon}_{C_{Y}}$, our proposal, if made precise, would yield a CFT derivation of the  generalized second law in complete generality, including spacetimes with dynamically evolving event horizons:
$$S_{\rm gen}[C_{Y_{1}}]>S_{\rm gen}[C_{Y_{2}}].$$

\section*{Acknowledgments}
It is a pleasure to thank L. Ciambelli, E. Gesteau, D. Harlow, V. Hubeny, J. Kudler-Flam, S. Leutheusser, G. Satishchandran, J. Sorce, and M. van Raamsdonk for discussions.  We are also grateful to E. Gesteau, D. Harlow, and J. Sorce for comments on an earlier draft. The work of NE is supported in part by the Department of Energy under Early Career Award DE-SC0021886 and under the HEP-QIS program under grant number DE-SC0025937, by the Heising-Simons Foundation under grant no. 2023-4430,  by the Templeton Foundation via the Black Hole Initiative. HL is supported by the Office of High Energy Physics of U.S. Department of Energy under grant Contract Number  DE-SC0012567 and DE-SC0020360 (MIT contract \# 578218).

\appendix
\section{Technical Lemmas}\label{app:lemmas}
Throughout this appendix we will take $Y\subset B$ be compact, causally convex (as defined in Section 1.1) with a boundary Cauchy-splitting Cauchy slice such that there exists a \textit{bulk} Cauchy slice $\Sigma$ (i.e. a Cauchy slice of $M\cup B$) on which $J^{\pm}[Y^{\mp}]\cap \Sigma\cap M \neq \Sigma \cap M$.

\begin{lem}\label{PastFutBdy} $\forall p\in Y^{+}$, every point $q\in \partial^{+}Y$ is either in $J_{B}^{+}(p)$ or is acausal to $p$. Furthermore, $\forall p \in Y^{+}$, $\exists q\in \partial ^{+}[Y]$ such that $q\in J_{B}^{+}(p)$.
\end{lem}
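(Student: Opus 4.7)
The plan is to prove each claim by contradiction, leveraging the defining properties of $D_{Y}^{\pm}[\sigma]$ and the future Cauchy horizon $\mathcal{H}_{Y}^{+}$ together with the causal convexity and compactness of $Y$ in $B$.

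For the first statement, I would suppose toward contradiction that there is a $p\in Y^{+}$ and a $q\in \partial^{+}Y$ with $q\in J_{B}^{-}(p)\setminus\{p\}$. Causal convexity of $Y$ inside $B$ ensures that any causal curve from $q$ to $p$ lies entirely in $Y$, so I can extend this curve past $q$ to a past-inextendible causal curve $\lambda$ in $B$ still passing through $p$. Because $p\in D_{Y}^{+}[\sigma]$, the portion of $\lambda$ inside $Y$ must intersect $\sigma$ at some $r$. If $r$ lies past of $q$ along $\lambda$, then $q$ sits on a causal curve from $\sigma$ to $p$; a short open-deformation argument (perturbing $\lambda$ slightly while preserving the endpoint constraint at $\sigma$) then forces $q$ into the interior of $D_{Y}^{+}[\sigma]$, contradicting $q\in\partial\mathcal{H}_{Y}^{+}$. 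If instead $r$ lies between $p$ and $q$ on $\lambda$, then $q\in J_{B}^{-}[\sigma]$, which is incompatible with $q$ belonging to the boundary of the future Cauchy horizon $\mathcal{H}_{Y}^{+}\subset J_{B}^{+}[\sigma]$. Either case delivers the required contradiction.

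For the second statement, I would argue constructively: given $p\in Y^{+}$, fire a future-directed inextendible timelike curve $\eta$ through $p$ in $B$. Compactness of $Y$ in $B$ forces $\eta$ to exit $Y$ in finite affine parameter, so there is a first exit point $q\in\partial Y\cap B$. Because $p$ already lies to the future of $\sigma$, and causal convexity prevents $\eta$ from re-entering after it leaves, $\eta$ cannot exit through $\sigma$ or through the past boundary of $Y$. The Cauchy-splitting property of $\sigma$ inside $Y$ rules out the remaining lateral possibilities, forcing $q$ to lie on $\mathcal{H}_{Y}^{+}$ and in particular on $\partial^{+}[Y]$, with $q\in J_{B}^{+}(p)$ by construction.

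The most delicate step I anticipate is the null case in Part 1, when the causal curve from $q$ to $p$ is a null geodesic and $q$ sits on a generator of $\mathcal{H}_{Y}^{+}$; here the open-deformation argument is not immediate, and one must argue, perhaps by a limiting procedure with nearby timelike curves and an appeal to global hyperbolicity of $B$, that any such generator reaching $Y^{+}$ either terminates before reaching $p$ or pushes $q$ off $\partial\mathcal{H}_{Y}^{+}$. For Part 2, the subtlety is verifying that the exit point $q$ cannot lie on a piece of $\partial Y$ disjoint from $\mathcal{H}_{Y}^{+}$, which the Cauchy-splitting hypothesis is precisely designed to preclude.
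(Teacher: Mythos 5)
Your Part 2 is essentially correct and close in spirit to what the paper does: both arguments come down to the fact that a future-directed inextendible causal curve from $p\in Y^{+}$ must leave the compact set $Y$, and acausality of $\sigma$ rules out an exit through $\partial^{-}Y$ or $\partial_{B}\sigma$ (a future-directed causal curve from $p\in D_{Y}^{+}[\sigma]$ landing back in $J_{B}^{-}[\sigma]$ would causally connect two distinct points of $\sigma$).

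The problem is Part 1, and it is exactly the gap you flag yourself: your argument only closes in the chronological case, and the chronological case is the trivial part of the claim. Indeed, $\mathcal{H}^{+}_{Y}\cap I^{-}_{B}\left[D_{Y}^{+}[\sigma]\right]=\varnothing$ is essentially the definition of the future Cauchy horizon, so the statement that $q\in\partial^{+}Y$ cannot be \emph{chronologically} preceded by a point of $Y^{+}$ needs no curve-plus-deformation argument at all. The entire content of the first statement is the null case, where $q$ lies on a generator of $\mathcal{H}^{+}_{Y}$ and the curve from $q$ to $p$ is an achronal null geodesic; there your ``open-deformation'' step in case (i) fails (you cannot push $q$ off the generator into ${\rm int}\,D_{Y}^{+}[\sigma]$), and the proposed fix by limits of nearby timelike curves cannot work either, since the dangerous configuration --- $q$ and $p$ sitting on a common null generator of the future boundary, causally but not chronologically related --- is precisely what survives such limits. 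The paper closes this case by a different device: it deforms $\sigma$ to another \emph{acausal} Cauchy slice $\tilde\sigma$ of $Y$ passing through $p$ (leaving $Y$, $Y^{\pm}$ and $\partial^{\pm}Y$ unchanged); then a causal curve from $q\in D_{Y}^{+}[\tilde\sigma]$ forward to $p\in\tilde\sigma$, concatenated with the crossing of $\tilde\sigma$ that the past-inextendible extension of that curve must make below $q$, produces two causally related points of $\tilde\sigma$, contradicting acausality --- an argument insensitive to whether the relation is timelike or null. To repair your proof you would need either to import that slice-through-$p$ deformation (and justify its existence), or to argue directly about the generators of $\mathcal{H}^{+}_{Y}$ (e.g.\ that the relevant generator segment through $q$ terminates on ${\rm edge}[\sigma]$ and cannot contain a point of $Y^{+}$ strictly to its future); as written, the first claim of the lemma is not proved.
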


\begin{proof}
We work fully on $B$ for the entirety of this proof. Every deformation of $\sigma$ that maintains achronality of $\sigma$, keeps $\sigma$ in $Y$, and keeps $D_{B}[\sigma]$ fixed will not change the sets $Y$, and subsequently $\partial^{\pm}Y$. Let $\wt \sigma$ be such a deformation that contains $p$. Then because every causal curve starting in $\wt Y^{+}$ (the new $Y^{+}$ defined by $\wt \sigma$) crosses $wt \sigma$ in the past and no such curve crosses $\wt \sigma$ in the future, points in $\wt Y^{+}$ are either in the future of $p$ or acausal to it. The same is true for for $\partial ^{\pm}\wt Y=\partial^{\pm}Y$. Finally, every future directed curve from $\wt \sigma$ must reach $\partial D^{+}_{Y}[\wt \sigma]=\partial ^{+}[Y]$; this proves the last statement. 
\end{proof}

\begin{lem}\label{NewLem3} ${\rm int}[D_{Y}[\sigma]]=I_{B}^{-}[D_{Y}^{+}[\sigma]]\cap I_{B}^{+}[D_{Y}^{-}[\sigma]]$.
\end{lem}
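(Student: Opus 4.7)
The plan is to prove the two inclusions separately, using causal convexity of $Y$ (so that $D_Y[\sigma]=Y$) together with acausality of $\sigma$. The right-hand side is manifestly open in $B$ since $I_B^\pm$ of any set is open, so to prove $\supseteq$ I would take $p\in I_B^-[Y^+]\cap I_B^+[Y^-]$, pick witnesses $r\in Y^+=D_Y^+[\sigma]$ and $s\in Y^-=D_Y^-[\sigma]$ with $s\ll p\ll r$, and concatenate timelike segments to obtain a causal curve from $s$ to $r$ through $p$. Causal convexity of $Y$ then forces this whole curve to lie in $Y$, so $p\in Y$; the same argument applied to any $q\in I_B^-(r)\cap I_B^+(s)$ exhibits an open neighborhood of $p$ in $Y$, so $p\in{\rm int}[D_Y[\sigma]]$.

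For the reverse inclusion $\subseteq$, I would start with $p\in{\rm int}(Y)$ and, without loss of generality (since $\sigma$ is a Cauchy slice of $Y$), assume $p\in Y^+$, with the case $p\in Y^-$ symmetric and $p\in\sigma$ covered by both. To show $p\in I_B^+[Y^-]$, I would use that $p\in D_Y^+[\sigma]$, so every past-inextendible causal curve in $Y$ from $p$ meets $\sigma$. Taking a past-directed timelike curve from $p$ and extending it past-inextendibly within $Y$ produces a point $p_*\in\sigma\subset Y^-$ with $p_*\ll p$, giving $p\in I_B^+[Y^-]$.

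For $p\in I_B^-[Y^+]$, the key observation is that $p\in D_Y^+[\sigma]$ implies $p\in I^+(\sigma)\cup\sigma$ (since $\sigma$ is acausal and $D^+[\sigma]\setminus\sigma\subset I^+(\sigma)$). Then for any $r\in I_B^+(p)\cap Y$ one has $r\in I^+(\sigma)$, and on the other hand $Y^-\subset J^-(\sigma)$ is disjoint from $I^+(\sigma)$ by acausality of $\sigma$; so $r\in Y\setminus Y^-=Y^+$. Since $p\in{\rm int}(Y)$, such an $r$ exists in any interior neighborhood, yielding $p\in I_B^-[Y^+]$.

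The main technical point to be careful about is the partition of $Y$ into $Y^+\cup Y^-$ with overlap on $\sigma$: I would handle $p\in\sigma$ by running the above two arguments (interior-to-future and past-inextendible-through-$\sigma$) in both directions using that $\sigma\subset Y^+\cap Y^-$. No appeal to the appendix lemmas is needed beyond the standard causal-structure identities $D^\pm[\sigma]\subset J^\pm[\sigma]$ and $I^+(\sigma)\cap J^-(\sigma)=\varnothing$ for acausal $\sigma$, together with the causal convexity of $Y$ already built into the setup.
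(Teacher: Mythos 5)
Your proof is correct, and it reaches the result by a more self-contained route than the paper. The paper disposes of the inclusion ${\rm int}[D_Y[\sigma]]\subseteq I_B^-[Y^+]\cap I_B^+[Y^-]$ in one line by viewing $Y$ as a globally hyperbolic spacetime in its own right, citing Wald's Lemma 8.3.3 to get ${\rm int}[D_Y[\sigma]]=I_Y^-[Y^+]\cap I_Y^+[Y^-]$, and then using causal convexity to identify $I_Y^\pm$ with $I_B^\pm\cap Y$; the only inclusion it proves by hand is the converse, and there it argues exactly as you do --- a point of $I_B^-[Y^+]\cap I_B^+[Y^-]$ lies on a causal curve with endpoints in $Y$, hence in $Y$ by causal convexity. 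You instead prove the Wald-direction from scratch (past-inextendible timelike curves through $D_Y^+[\sigma]$ must cross $\sigma$, plus the disjointness $I^+[\sigma]\cap J^-[\sigma]=\varnothing$ for acausal $\sigma$), which makes the dependence on the conventions explicit at the cost of some case analysis; note in particular that your "past-inextendible-through-$\sigma$" step degenerates when $p\in\sigma$ (the curve meets $\sigma$ at $p$ itself), so for that case you must fall back on the interior argument run in both time directions, as you indicate. Your $\supseteq$ direction also adds an explicit openness check ($I_B^-(r)\cap I_B^+(s)$ is an open neighborhood of $p$ inside $Y$) that the paper gets for free from the equality chain; this is a genuine improvement in transparency, since the paper's final sentence only establishes $p\in Y$, not $p\in{\rm int}[Y]$, and the reader must unwind the chain of equalities to see why that suffices. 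One cosmetic slip: $Y\setminus Y^-$ is $Y^+\setminus\sigma$ rather than $Y^+$, but the containment you actually need still holds.
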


\begin{proof}
\begin{align*}{\rm int}[Y]=\rm int [D_{Y}[\sigma]]  & =I_{Y}^{-}[D_{Y}^{+}[\sigma]]\cap I_{Y}^{+}[D_{Y}^{-}[\sigma]]\\
& =I_{B}^{-}[D_{Y}^{+}[\sigma]]\cap I_{B}^{+}[D_{Y}^{-}[\sigma]]\cap Y \\ 
& \subseteq I_{B}^{-}[D_{Y}^{+}[\sigma]]\cap I_{B}^{+}[D_{Y}^{-}[\sigma]]
\end{align*}
where the first line follows by definition of $D_{Y}$ and from Wald's lemma 8.3.3 and the second line by definition of $I_{Y}^{\pm}$. We would like to prove that the final line is in fact an equality of sets. Let $p\in I_{B}^{-}[D_{Y}^{+}[\sigma]]\cap I_{B}^{+}[D_{Y}^{-}[\sigma]]=I_{B}^{-}[Y^{+}]\cap I_{B}^{+}[Y^{-}]\subset I_{B}^{-}[Y]\cap I_{B}^{+}[Y]$. We would like to show that $p\in Y$. Since $p\in I_{B}^{-}[Y]\cap I_{B}^{+}[Y]$, there exists a causal curve through $p$ with endpoints in $Y$. By causal convexity, $p\in Y$.
\end{proof}


\begin{lem}\label{NewLem4} $D_{Y}[\sigma]=J_{B}^{-}[D_{Y}^{+}[\sigma]]\cap J_{B}^{+}[D_{Y}^{-}[\sigma]]$    
\end{lem}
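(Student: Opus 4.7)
\textbf{Proof plan for Lemma \ref{NewLem4}.}

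My plan is to mirror the structure of the proof of Lemma~\ref{NewLem3}, with causal pasts/futures replacing timelike ones, and with causal convexity doing the same job at the end that it did there. The starting observation is that by definition $D_Y[\sigma] = Y^+ \cup Y^-$ where $Y^\pm \equiv D_Y^\pm[\sigma]$, and by causal convexity of $Y$ together with $\sigma$ being a Cauchy slice of $Y$, we moreover have $D_Y[\sigma] = Y$. So the claim reduces to showing $Y = J_B^-[Y^+]\cap J_B^+[Y^-]$.

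For the forward inclusion I would argue directly from the definitions. Given $p\in D_Y[\sigma]$, split into cases: if $p\in Y^+$, then trivially $p\in J_B^-[Y^+]$ (since $p\in J_B^-(p)$), and since $Y^+\subset J_B^+[\sigma]$ and $\sigma\subset Y^-$, we also get $p\in J_B^+[Y^-]$. The case $p\in Y^-$ is symmetric in time. This establishes $D_Y[\sigma]\subseteq J_B^-[Y^+]\cap J_B^+[Y^-]$ without needing Lemma~\ref{PastFutBdy} at all.

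For the reverse inclusion, take $p\in J_B^-[Y^+]\cap J_B^+[Y^-]$. Then there exist a future-directed causal curve from some $q^-\in Y^-\subset Y$ to $p$, and a future-directed causal curve from $p$ to some $q^+\in Y^+\subset Y$. Concatenating at $p$ yields a (continuous) causal curve with both endpoints in $Y$ passing through $p$. By causal convexity of $Y$, $p\in Y = D_Y[\sigma]$, which closes the inclusion.

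The only place where care is needed is the concatenation/causal-convexity step: one must check that the glued curve is genuinely a causal curve (standard, since both pieces are causal and meet at $p$), and one must invoke causal convexity in the sense stated in Section~1.1 to conclude that the whole curve --- and in particular $p$ itself --- lies in $Y$. No appeal to Cauchy horizons or to Lemma~\ref{PastFutBdy} is required; the argument is strictly weaker than that of Lemma~\ref{NewLem3} because $J^\pm$ is a closed relation and we need not fight with open-versus-closed issues at $\partial Y$.
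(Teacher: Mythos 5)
Your proof is correct, and it takes a genuinely different route from the paper's. The paper obtains the inclusion $D_{Y}[\sigma]\subset J_{B}^{-}[Y^{+}]\cap J_{B}^{+}[Y^{-}]$ by passing through Lemma~\ref{NewLem3}, taking closures, and invoking global hyperbolicity of $B$ to identify $\overline{I_{B}^{\pm}[\cdot]}$ with $J_{B}^{\pm}[\cdot]$; you instead get this direction for free from $Y^{\pm}\subset J_{B}^{\mp}[Y^{\pm}]$ and $Y^{\pm}\subset J_{B}^{\pm}[\sigma]\subset J_{B}^{\pm}[Y^{\mp}]$, which is cleaner and avoids the open-versus-closed bookkeeping entirely. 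For the reverse inclusion the paper decomposes $J_{B}^{-}[D_{Y}^{+}[\sigma]]$ as $D_{Y}^{+}[\sigma]\cup J_{B}^{-}[\sigma]$ (and its time reverse) and runs a four-case analysis, using acausality of $\sigma$ to handle the degenerate cases; you instead concatenate the two causal curves at $p$ and invoke causal convexity of $Y$ directly, which is exactly the move the paper itself makes at the end of the proof of Lemma~\ref{NewLem3}, lifted from $I$-relations to $J$-relations. Both arguments are sound; yours is more elementary and makes the role of causal convexity more transparent, while the paper's casework makes explicit where acausality of $\sigma$ enters (in your version it is hidden inside the identity $D_{Y}[\sigma]=Y$, which holds by the paper's definition of causal convexity in Section~1.1). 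The only point worth stating explicitly in a polished write-up is the degenerate case where one of the two causal curves is trivial (i.e.\ $p$ already lies in $Y^{+}$ or $Y^{-}$), which you correctly note is immediate.
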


\begin{proof}
\begin{align}
    D_{Y}[\sigma]& = \overline{{\rm int}D_{Y}[\sigma]} \\
    & = \overline{I_{B}^{-}[D_{Y}^{+}[\sigma]]\cap I_{B}^{+}[D_{Y}^{-}[\sigma]]}\\
    & \subset \overline{I_{B}^{-}[D_{Y}^{+}[\sigma]]}\cap \overline{I_{B}^{+}[D_{Y}^{-}[\sigma]]}\\
    & =J_{B}^{-}[D_{Y}^{+}[\sigma]]\cap J_{B}^{+}[D_{Y}^{-}[\sigma]]
\end{align}  
where the second line follows by Lemma~\ref{NewLem3} and the fourth line follows by global hyperbolicity of $B$. 
Now let $p\in J_{B}^{-}[D_{Y}^{+}[\sigma]]\cap J_{B}^{+}[D_{Y}^{-}[\sigma]]$. 
$$p\in J_{B}^{-}[D_{Y}^{+}[\sigma]] \ \Rightarrow \ p \in D_{Y}^{+}[\sigma] \ \ {\rm or} \ \ p\in J_{B}^{-}[\sigma]$$
$$p\in J_{B}^{+}[D_{Y}^{-}[\sigma]] \ \Rightarrow \ p \in D_{Y}^{-}[\sigma] \ \ {\rm or} \ \ p\in J_{B}^{+}[\sigma]$$
There are thus four possibilities:
\begin{enumerate}
    \item $p\in D_{Y}^{+}[\sigma]$ and $p\in D_{Y}^{-}[\sigma]$;
    \item $p\in D_{Y}^{+}[\sigma]$ and $p\in J_{B}^{+}[\sigma]$;
    \item $p\in J_{B}^{-}[\sigma]$ and $p\in D_{Y}^{-}[\sigma]$;
    \item $p\in J_{B}^{-}[\sigma]$ and $p\in J_{B}^{+}[\sigma]$.
\end{enumerate}
(1) or (4) can only happen if $p\in \sigma$. (2) happens if $p\in D_{Y}^{+}[\sigma]\cap J_{B}^{+}[\sigma]=D_{Y}^{+}[\sigma]$ and (3) happens if $p\in J_{B}^{-}[\sigma]\cap D_{Y}^{-}[\sigma]=D_{Y}^{-}[\sigma]$. Thus we find $p\in \sigma \cup D_{Y}^{-}[\sigma]\cup D_{Y}^{+}[\sigma]=D_{Y}[\sigma]$.
\end{proof}

\begin{lem} \label{DomainofDepLem}
    $D[S]= \{q\in M: J(q)\subset J[S]\}$ whenever $S$ is closed and achronal. 
\end{lem}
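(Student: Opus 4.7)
The plan is to prove both inclusions separately. For the forward inclusion $D[S]\subset\{q:J(q)\subset J[S]\}$, I would take $q\in D[S]$ and any $p\in J(q)$, and (without loss of generality) consider a future-directed causal curve $\gamma$ from $q$ to $p$; extending $\gamma$ to a past- and future-inextendible causal curve $\hat\gamma$ through $q$, the defining property of $D[S]$ gives an intersection point $s\in\hat\gamma\cap S$. A case analysis on whether $s$ lies on the past extension of $\gamma$, on $\gamma$ itself, or on the future extension yields either $s\leq p$ or $p\leq s$, so in either case $p\in J^+[S]\cup J^-[S]=J[S]$.

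For the reverse inclusion, I would let $J(q)\subset J[S]$ and take any inextendible causal curve $\gamma$ through $q$; since every point of $\gamma$ is causally related to $q$, we have $\gamma\subset J[S]$. Next I would introduce $A=\gamma\cap J^+[S]$ and $B=\gamma\cap J^-[S]$, which are closed in $\gamma$ by closedness of $J^\pm[S]$ in a globally hyperbolic spacetime, and satisfy $A\cup B=\gamma$. The key observation is that acausality of $S$ (the relevant regularity in this paper, whose convention takes all Cauchy slices acausal) implies $A\cap B\subset S$: for any $x\in J^+[S]\cap J^-[S]$ one has $s_1\leq x\leq s_2$ with $s_1,s_2\in S$, and acausality of $S$ collapses $s_1=s_2=x\in S$. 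If $A\cap B\neq\emptyset$, this immediately yields $\gamma\cap S\neq\emptyset$ and we are done.

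The remaining case $A\cap B=\emptyset$ is where I expect the main difficulty. Connectedness of $\gamma$ then forces $\gamma\subset J^+[S]$ (without loss of generality), so the past-inextendible portion $\gamma^-$ of $\gamma$ lies entirely in $J^+[S]\cap J^-(q)$, which rewrites as $J^+[S\cap J^-(q)]\cap J^-(q)$. The hard step is showing this set is compact. I would invoke the standard fact that, for a closed acausal $S$ in a globally hyperbolic spacetime, $S\cap J^-(q)$ is compact, from which compactness of $J^+[S\cap J^-(q)]\cap J^-(q)$ follows by global hyperbolicity; but then $\gamma^-$ would be imprisoned in a compact set, contradicting past-inextendibility. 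This contradiction rules out $A\cap B=\emptyset$ and completes the proof.
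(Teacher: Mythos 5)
Your overall structure is sound and in places more careful than the paper's own argument, which proves the inclusion $\{q: J(q)\subset J[S]\}\subset D[S]$ by tracking how a past-directed curve from $q$ exits $J^{+}[S]$ and silently skips the case in which the curve never exits $J^{+}[S]$ at all --- exactly the case you isolate as $A\cap B=\varnothing$. Your forward inclusion is correct, and your treatment of the case $A\cap B\neq\varnothing$ is correct provided $S$ is acausal rather than merely achronal; that upgrade is genuinely needed, since for the two null-separated points $S=\{(0,0),(1,1)\}$ in two-dimensional Minkowski space every interior point $q$ of the connecting null segment satisfies $J(q)\subset J[S]$ but $q\notin D[S]$, so the lemma as literally stated (achronal $S$) fails and your implicit strengthening of the hypothesis is the right move.

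The genuine gap is the ``standard fact'' you invoke to close the case $A\cap B=\varnothing$: it is false that $S\cap J^{-}(q)$ is compact for every closed acausal $S$ in a globally hyperbolic spacetime. Take $S=\{t=-\sqrt{1+x^{2}}\}$ in two-dimensional Minkowski space and $q$ the origin: $S$ is closed, acausal and edgeless, yet $S\subset J^{-}(q)$, so $S\cap J^{-}(q)=S$ is noncompact. This is not a repairable technicality at the stated level of generality, because the lemma itself fails for this $S$: one checks that $J(q)\subset J[S]$, while the past-directed null ray $t=-x$ from the origin asymptotes to $S$ without meeting it, so $q\notin D[S]$. In other words, your proof breaks exactly where the statement breaks. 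The compactness you need must be assumed, not derived: if $S$ is compact (as it is for every set to which the paper actually applies this circle of ideas, e.g.\ $C_{Y}$, which lies in a compact portion of a Cauchy slice of the spatially compact $M\cup B$), then $S\cap J^{-}(q)$ is trivially compact, $J^{+}[S\cap J^{-}(q)]\cap J^{-}(q)$ is compact by global hyperbolicity, and your imprisonment argument correctly finishes the proof. With the hypothesis ``$S$ compact and acausal'' stated explicitly, your argument is complete, and it is a genuinely different (and tighter) route than the paper's, which never confronts the imprisonment case.
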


\begin{proof}
    $J(q)\subset J[S] \ \Rightarrow \ q\in J^{+}[S]$ or $q\in J^{-}[S]$ or $q \in S$. If $q\in S$ then $q\in D[S]$ and we are done. Suppose $q\in J^{+}[S]$ or $q\in J^{-}[S]$ but not in $S$, and not in the causal complement $S'$ of $S$. 
    Assume WLOG that $q\in J^{+}[S]$ (the time reverse follows mutatis mutandis). Then $J^{+}(q)\subset J^{+}[S]$ and $J^{-}(q)\subset J^{+}[S]\cup J^{-}[S]$. In particular, consider a past-directed causal curve $\gamma$ from $q$. Since $q\in J^{+}[S]$, $\gamma$ starts out in $J^{+}[S]$. Suppose at some point $\gamma$ exits $J^{+}[S]$. It can do so by crossing the null congruence that generates $\partial J^{+}[S]$ or by crossing $S$. If every such $\gamma$ crosses $S$ then $p\in D^{+}[S]$ and we are done. Suppose it crosses $\partial J^{+}[S]$ instead. Then by achronality of $S$, $\gamma$ must enter $S'$. But that contradicts $J^{-}(q)\subset J^{+}[S]\cup J^{-}[S]$. So every past-directed inextendible causal curve from $q$ must cross $S$ and $q\in D^{+}[S]$. Thus $J(q)\subset J[S]$ and $q\in J^{\pm}[S]$ imply $q\in D^{\pm}[S]$; $J(q)\subset J[S]$ and $q\in S$ also implies $q\in D[S]$. Since that covers all the options, we have one direction: $\{q\in M: J(q)\subset J[S]\}\subset D[S]$. 

    Now for the opposite direction. Let $q\in D^{+}[S]$. Then every past-directed inextendible causal curve from $q$ intersects $S$. This immediately implies $J^{+}(q)\subset J^{+}[S]$. Furthermore let $Q=J^{-}(q)\cap S$. Then $J^{-}(q)\subset J^{+}[Q]\cup J^{-}[Q]$. This implies $J^{-}(q)\subset J^{+}[S]\cup J^{-}[S]=J[S]$. Thus $J(q)\subset J[S]$. The same proof applies mutatis mutandis in the time reverse.
\end{proof}

\begin{lem}\label{CausalCompletionLem}
    For a closed achronal codimension-1 set $S$ which is compactly supported in some Cauchy foliation, $D[S]=S''$. 
\end{lem}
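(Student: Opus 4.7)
The plan is to reduce the statement to Lemma~\ref{DomainofDepLem}, which characterizes $D[S] = \{q \in M : J(q) \subseteq J[S]\}$ for any closed achronal $S$. The additional hypotheses of compact support and codimension-1 enter only to ensure that $S$ is acausal in the full spacetime and that the causal complement $S' = \{p \in M : p \text{ is spacelike from every point of } S\}$ is a well-defined, geometrically nontrivial set; no further machinery is needed.

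For the forward inclusion $D[S] \subseteq S''$, I would take $q \in D[S]$ and an arbitrary $p \in S'$, and show that $p$ and $q$ must be spacelike separated. Suppose otherwise, i.e. $p \in J(q)$. Then by Lemma~\ref{DomainofDepLem}, $p \in J(q) \subseteq J[S]$, so $p$ is causally related to some point of $S$, contradicting $p \in S'$. Hence $q$ is spacelike from every point of $S'$, and therefore $q \in S''$.

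For the reverse inclusion $S'' \subseteq D[S]$, I take $q \in S''$ and proceed by the contrapositive of Lemma~\ref{DomainofDepLem}. If $q \notin D[S]$, then there exists $r \in J(q)$ with $r \notin J[S]$. The condition $r \notin J[S]$ is precisely the statement that $r$ is spacelike from every point of $S$, so $r \in S'$. But then $q \in S''$ forces $q$ to be spacelike from $r$, contradicting $r \in J(q)$. Hence $J(q) \subseteq J[S]$, and Lemma~\ref{DomainofDepLem} gives $q \in D[S]$.

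I do not foresee a serious technical obstacle: the argument is essentially a definitional unwinding, with the only real content being the equivalence of ``$r \notin J[S]$'' and ``$r \in S'$'', which is immediate in a globally hyperbolic spacetime. The main place where care is required is in ensuring that $S'$ is the causal (rather than chronological) complement, so that ``not causally related'' and ``spacelike from every point'' coincide, and in verifying that the hypotheses on $S$ are sufficient to invoke Lemma~\ref{DomainofDepLem}.
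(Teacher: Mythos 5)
Your proof is correct, but it takes a genuinely different route from the paper's. The paper proves Lemma~\ref{CausalCompletionLem} directly: it invokes the fact (citing external results, and using the codimension-1 and compact-support hypotheses) that $S\cup S'$ contains a full Cauchy slice $\Sigma$ with $S\subset\Sigma$, and then argues that any inextendible causal curve from a point of $S''$ must cross $\Sigma$ on $\Sigma\setminus S'=S$. You instead observe that, with $S'=M\setminus J[S]$ and $S''=M\setminus J[S']$, the double causal complement is literally the set $\{q: J(q)\subseteq J[S]\}$, which Lemma~\ref{DomainofDepLem} identifies with $D[S]$; both of your inclusions are then definitional unwindings plus the symmetry $r\in J(q)\Leftrightarrow q\in J(r)$. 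Your version is more economical and, notably, never uses the codimension-1 or compactness hypotheses --- it works for any closed achronal $S$ satisfying the hypotheses of Lemma~\ref{DomainofDepLem}, so it is strictly more general than the statement as given (there is no circularity, since the paper's proof of Lemma~\ref{DomainofDepLem} does not rely on Lemma~\ref{CausalCompletionLem}). What the paper's route buys in exchange is a self-contained geometric picture and the useful intermediate fact that $S\cup S'$ houses a Cauchy slice through $S$. The one point you should make explicit is the convention that the prime denotes the exact causal complement $M\setminus J[S]$ rather than, say, its interior; both your argument and the paper's tacitly rely on this, and your identification of ``$r\notin J[S]$'' with ``$r\in S'$'' is only an identity under that convention.
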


\begin{proof}
Let $r\in S''$. Then by definition, $r\in M\backslash J[S']$: no causal curves through $r$ can reach $S'$. By definition of $S'$ and by global hyperbolicity of $B$ or $M\cup B$ (depending on which manifold $S$ is codimension-1 in), $S'\cup S$ contains a complete Cauchy slice $\Sigma$ of the full spacetime such that $S\subset \Sigma$~\cite{BerSan05, HeaHub14}. If $r\in J^{+}[\Sigma]$, then because $r\in S''$, every past-directed curve from $r$ intersects $\Sigma\backslash S'=S$.  Thus $r\in D^{+}[S]$. Similarly for the time reverse. Thus $S''\subset D[S]$.

Now for the other direction. $p\in D[S]$ means that all inextendible causal curves through $p$ must also go through $S$. That means that none of these causal curves can go through $S'$. Thus $p$ is acausal to $S'$: $p\in S''$.
\end{proof}

\begin{lem}\label{LemmaA} The following four statements hold under the assumptions in this paper:
 \begin{enumerate}  
\item $J^{\pm}_{B}[Y^{\mp}]=J^{\pm}_{B}[Y]$
\item $J^{\pm}[Y^{\mp}]=J^{\pm}[Y]$
\item $J^{\pm}_{B}[Y^{\mp}]=J^{\pm}_{B}[\partial^{\mp}Y]$
\item $J^{\pm}[Y^{\mp}]=J^{\pm}[\partial^{\mp}Y]$
\end{enumerate} 
\end{lem}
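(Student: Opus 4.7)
The plan is to prove each of the four equalities by two inclusions, with only one direction carrying content in each case. The easy direction comes from passing a set inclusion through the causal future: for (1) and (2), $Y^{\mp}\subset Y$ gives $J^{\pm}_{B}[Y^{\mp}]\subset J^{\pm}_{B}[Y]$ and $J^{\pm}[Y^{\mp}]\subset J^{\pm}[Y]$, while for (3) and (4), $\partial^{\mp}Y\subset Y^{\mp}$---which follows because $\partial^{-}Y\subset \partial Y\subset Y$ sits in the past of $\sigma$ and $Y\cap J^{-}[\sigma]\subset Y^{-}$ by $Y=D_{Y}[\sigma]$---yields the analogous inclusion. I only write the $+$ case of the nontrivial direction below; the $-$ case is the time reverse.

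For (1), suppose $p\in J^{+}_{B}[Y]$, witnessed by $q\in Y$ and a $B$-causal curve $q\leq p$. If $q\in Y^{-}$ we are done, so assume $q\in Y^{+}\setminus \sigma = D_{Y}^{+}[\sigma]\setminus \sigma$. By definition of the domain of dependence, every past-inextendible causal curve in $Y$ from $q$ must cross $\sigma$; pick one and let $q'\in \sigma\subset Y^{-}$ be such a crossing. Concatenation gives $q'\leq p$ in $B$, so $p\in J^{+}_{B}[Y^{-}]$. For (2) the same argument applies with one small step added: the past-directed segment from $q$ to $q'$ lives in $B\subset M\cup B$ and hence is $M\cup B$-causal (trivially), so composing with the given $q\leq p$ in $M\cup B$ yields $q'\leq p$ in $M\cup B$.

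For (3), let $p\in J^{+}_{B}[Y^{-}]$ with witness $q\in Y^{-}$ and $q\leq p$ in $B$. Consider any past-inextendible causal curve $\alpha$ in $B$ starting at $q$. Causal convexity of $Y$ prevents $\alpha$ from re-entering $Y$ once it exits; together with compactness of $Y$ and past-inextendibility of $\alpha$, this forces $\alpha$ to exit $Y$ at some $q'\in \partial Y$. Since $\alpha$ is past-directed from $q\in Y^{-}\subset J^{-}_{B}[\sigma]\cup \sigma$ and $\sigma$ is acausal, $q'$ lies to the past of $\sigma$, hence $q'\in \partial^{-}Y$. Then $q'\leq q\leq p$ in $B$ gives $p\in J^{+}_{B}[\partial^{-}Y]$. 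Statement (4) follows by the same construction: $\alpha$ remains in $B$, so $q'\leq q$ in $B$ is inherited by $M\cup B$ and composes with the given $q\leq p$ in $M\cup B$.

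The main subtlety is verifying that the chosen past-directed auxiliary curve lands on the correct side of $\sigma$---in $\sigma\subset Y^{-}$ for (1)--(2), and in $\partial^{-}Y$ rather than $\partial^{+}Y$ for (3)--(4). Both reduce to acausality of $\sigma$ together with the standard decomposition $Y=Y^{+}\cup Y^{-}$ with $Y^{+}\cap Y^{-}=\sigma$ for a causally convex region with acausal Cauchy slice $\sigma$. The only real global input is the combination of compactness of $Y$ and causal convexity that guarantees a past-inextendible causal curve from a point of $Y$ must exit through $\partial Y$; this is what lets us place the auxiliary point on $\partial^{-}Y$ in parts (3) and (4).
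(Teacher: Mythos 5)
Your proof is correct, and for parts (3) and (4) it follows essentially the same route as the paper: chase a past-directed causal curve from the witness point in $Y^{-}$ until it leaves $Y$, and argue the exit point lands on $\partial^{-}Y$. In fact your version is more careful here --- you justify that the curve must exit $Y$ via compactness of $Y$ plus the no-imprisonment property of globally hyperbolic spacetimes, and you use acausality of $\sigma$ to rule out an exit through $\partial^{+}Y$, whereas the paper's argument for (3) only remarks that $\partial^{-}Y\neq\varnothing$, and for (4) it outsources the key fact to Lemma~\ref{PastFutBdy}. For parts (1) and (2) your route is genuinely different: you directly push the witness point $q\in Y^{+}$ back to a crossing of $\sigma\subset Y^{-}$ using the domain-of-dependence property and then concatenate, while the paper instead invokes the set identity $Y=J^{-}_{B}[Y^{+}]\cap J^{+}_{B}[Y^{-}]$ from Lemma~\ref{NewLem4} together with idempotency of $J^{+}_{B}$, and then transfers to the bulk via boundary causality ($J^{+}[J^{+}_{B}[Y^{-}]]=J^{+}[Y^{-}]$). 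The paper's version buys brevity by reusing an existing lemma; yours is more self-contained and makes the causal-curve mechanics explicit, at the cost of having to verify that the auxiliary crossing of $\sigma$ occurs to the past of $q$ (which you correctly reduce to acausality of $\sigma$ and the decomposition $Y=Y^{+}\cup Y^{-}$ with $Y^{+}\cap Y^{-}=\sigma$). No gaps.
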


\begin{proof}
We will show all of these for one time direction; the opposite time direction follows immediately.

\noindent (1): $Y^{-}\subset Y$ $\Rightarrow$ $J^{+}_{B}[Y^{-}]\subset J^{+}_{B}[Y].$ By Lemma~\ref{NewLem4}, $Y=J^{-}_{B}[Y^{+}]\cap J^{+}_{B}[Y^{-}]$. So:
$$J^{+}_{B}[Y]=J^{+}_{B}[J^{-}_{B}[Y^{+}]\cap J^{+}_{B}[Y^{-}]]\subset J^{+}_{B}[J_{B}^{+}[Y^{-}]]=J_{B}^{+}[Y^{-}].$$
(2) The same reasoning above yields $Y^{-}\subset J^{+}[Y]$ and $J^{+}[Y]\subset J^{+}[J^{+}_{B}[Y^{-}]]$. Because the bulk respects boundary causality $J^{+}[J_{B}[Y^{-}]]=J^{+}[Y^{-}]$, which yields the desired result. 
\noindent (3) One direction of the inclusion follows immediately: since $\partial^{\pm} Y\subset Y^{\pm}$, it is clear that $J^{\pm}[\partial^{\mp}Y]\subset J^{\pm}[Y^{\mp}]$. For the other direction, let $p\in J_{B}^{+}[Y^{-}]$. Then there exists a past-directed boundary causal curve from $p$ to $Y^{-}$. Extend this curve continuously towards the past. Because $\partial^{-}Y\neq \varnothing$ by definition of $Y$, eventually this curve must cross $\partial^{-}Y$. So $p\in J^{+}_{B}[\partial ^{-}Y]$. 

\noindent (4) The first inclusion works identically as in (3). For the other direction, let $p\in J^{+}Y^{-}]$.  Then there exists a past-directed bulk causal curve from $p$ to $Y^{-}$. Since every point in $Y^{-}$ is to the future of some subset of $\partial ^{-}Y$ by Lemma~\ref{PastFutBdy}, $p\in J^{+}[\partial^{-}Y]$.  
\end{proof}

\begin{lem} $\partial J^{+}[Y]\cap \partial J^{-}[Y]\cap B=\partial_{B} \sigma$, where by $\partial _{B}\sigma$ we mean the boundary of $\sigma$ on some Cauchy slice of $B$.  \label{lemma3}
\end{lem}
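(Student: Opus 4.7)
The plan is to prove both inclusions by first reducing each to a boundary-only statement via boundary causality, and then arguing entirely on $B$. Since $M\cup B$ is globally hyperbolic and $Y$ is compact, $J^{\pm}[Y]$ is closed and $\partial J^{\pm}[Y]=J^{\pm}[Y]\setminus I^{\pm}[Y]$; boundary causality yields $J^{\pm}[Y]\cap B=J_{B}^{\pm}[Y]$ and $I^{\pm}[Y]\cap B=I_{B}^{\pm}[Y]$, so
\be
\partial J^{+}[Y]\cap \partial J^{-}[Y]\cap B = \left(J_{B}^{+}[Y]\setminus I_{B}^{+}[Y]\right)\cap \left(J_{B}^{-}[Y]\setminus I_{B}^{-}[Y]\right),
\ee
and the lemma reduces to showing that this intersection equals $\partial_{B}\sigma$.

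For the inclusion $\partial_{B}\sigma \subseteq \partial J^{+}[Y]\cap\partial J^{-}[Y]\cap B$, I would pick $p\in\partial_{B}\sigma$. Since $p\in\sigma\subseteq Y\subseteq J_{B}^{\pm}[Y]$, it remains only to show $p\notin I_{B}^{\pm}[Y]$. I would invoke the chronological analog of Lemma~\ref{LemmaA}, namely $I_{B}^{+}[Y]=I_{B}^{+}[\partial^{-}Y]$, which follows by the same causal-convexity-and-boundary-causality argument used for the causal versions. Since $\partial^{-}Y$ is the past Cauchy horizon of $Y$---an achronal null hypersurface in $B$ whose spacelike edge is precisely $\partial_{B}\sigma$---we have $p\in\partial^{-}Y$, and achronality of $\partial^{-}Y$ forces $p\notin I_{B}^{+}[\partial^{-}Y]=I_{B}^{+}[Y]$; the time-reverse case is symmetric.

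For the reverse inclusion, let $p\in\partial J^{+}[Y]\cap\partial J^{-}[Y]\cap B$. Boundary causality gives $p\in J_{B}^{+}[Y]\cap J_{B}^{-}[Y]$ with $p\notin I_{B}^{\pm}[Y]$, and causal convexity of $Y$ (applied to any $q_{1}\leq p\leq q_{2}$ with $q_{1},q_{2}\in Y$) forces $p\in Y$. If $p\in Y^{+}\setminus\sigma$, then since $Y$ is itself globally hyperbolic with acausal Cauchy slice $\sigma$, there is a timelike curve in $Y$ from $\sigma$ to $p$, giving $p\in I_{B}^{+}[Y]$, a contradiction; the case $p\in Y^{-}\setminus\sigma$ is symmetric, leaving $p\in\sigma$. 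Suppose now for contradiction that $p$ lies in the interior of $\sigma$ relative to some Cauchy slice $\Sigma_{B}$ of $B$ containing $\sigma$; the local product structure of $B$ near $p$ then provides a $B$-open neighborhood of $p$ inside $Y$, from which I pick $q\ll p$ with $q\in Y^{-}$, yielding $p\in I_{B}^{+}[Y^{-}]\subseteq I_{B}^{+}[Y]$, a contradiction. Hence $p\in\partial_{B}\sigma$.

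The main obstacle is that last step: formalizing that an interior point of $\sigma$ in $\Sigma_{B}$ lies in the $B$-interior of $Y$. Geometrically $Y$ has nontrivial temporal thickness on both sides of $\sigma$ at every interior point of $\sigma$, but turning this into a clean argument requires the causally-convex-equals-globally-hyperbolic characterization of $Y$ from Section~1.1 together with the openness of chronological futures and pasts near an acausal Cauchy surface, with some care about regularity of $\sigma$ at its spatial edge. A related minor point is that the chronological analog of Lemma~\ref{LemmaA} must be verified in parallel to the causal version already established in the main text.
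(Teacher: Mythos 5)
Your proof is correct, and while it shares the paper's overall skeleton (reduce to a statement on $B$ via boundary causality, then locate $p$ relative to $\sigma$), both inclusions are argued differently in ways worth noting. For the reverse inclusion the paper goes via achronal null generators of $\partial J_{B}^{\pm}[\partial^{\mp}Y]$ reaching $p$, rules out ${\rm int}[Y]$ and $\partial^{\pm}Y$ by exhibiting chronal curves, and then needs a separate, somewhat involved argument for the case $p\notin Y$ ending in $p\in J^{+}[\sigma]\cap J^{-}[\sigma]=\sigma$. You instead concatenate the causal curves witnessing $p\in J_{B}^{+}[Y]\cap J_{B}^{-}[Y]$ and invoke causal convexity to place $p\in Y$ immediately, then use global hyperbolicity of $Y$ (every point of $Y^{\pm}\setminus\sigma$ lies in $I_{B}^{\pm}[\sigma]$) together with ${\rm int}[\sigma]\subset{\rm int}[Y]$ --- which is exactly what Lemma~\ref{NewLem3} supplies --- to force $p\in\partial_{B}\sigma$. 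This is cleaner and eliminates the $p\notin Y$ case entirely. The price is that your reduction rests on the chronological refinement of boundary causality, $I^{\pm}[Y]\cap B=I_{B}^{\pm}[Y]$, which the paper never states; it does follow from the causal version by the standard openness argument (a bulk timelike curve $q\ll p$ gives a $B$-open neighborhood of $q$ inside $J^{-}[p]\cap B=J_{B}^{-}[p]$, whence $p\in J_{B}^{+}\bigl[I_{B}^{+}(q)\bigr]\subseteq I_{B}^{+}(q)$), but you should record it, along with closedness of $J^{\pm}[Y]$ so that $\partial J^{\pm}[Y]=J^{\pm}[Y]\setminus I^{\pm}[Y]$. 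For the forward inclusion the paper simply notes that $\partial_{B}\sigma$ is achronally separated from $\partial^{\pm}Y$; your route through $I_{B}^{+}[Y]=I_{B}^{+}[\partial^{-}Y]$ and achronality of the closed past boundary is a more explicit expansion of the same idea, and the chronological analog of Lemma~\ref{LemmaA} you invoke does go through verbatim. The two obstacles you flag at the end are both already handled by the paper's toolkit (Lemma~\ref{NewLem3} and Lemma~\ref{PastFutBdy} respectively), so there is no real gap.
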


\begin{proof}
Because the bulk by assumption respects boundary causality, $\partial J^{+}[Y]\cap \partial J^{-}_{B}[Y]=   \partial J^{+}_{B}[Y]\cap  \partial J^{-}_{B}[Y]$. Since $\partial_{B} Y=\partial^{-}Y\cup \partial ^{+}Y\cup \partial_{B} \sigma$, and $\partial_{B} \sigma$ is achronally-separated from both $\partial^{\pm }Y$ (since $Y\subset D[\sigma]$), $\partial_{B} \sigma\subset \partial J^{\pm}_{B}[Y]$. /in particular, $
\partial_{B}\sigma \subset \partial J_{B}^{+}[Y]\cap \partial J_{B}^{-}[Y]=\partial J^{+}[Y]\cap \partial J^{-}[Y]\cap B$. This establishes one direction. 

Now let $p\in \partial J^{+}_{B}[Y]\cap  \partial J^{-}_{B}[Y] = \partial J^{+}_{B}[\partial ^{-}Y]\cap  \partial J^{-}_{B}[\partial ^{+}Y]$, where the equality follows by Lemma~\ref{LemmaA}.  Therefore there exists an achronal future (past) directed null geodesic from $\partial^{-}[Y]$ ($\partial^{+}[Y]$) to $p$. In particular, $p$ is \textit{not} chronally-separated from either $\partial^{\pm}Y$, which are themselves achronal by definition.  

By definition of global hyperbolicity, for all $q\in {\rm int}[Y^{\pm}]$,  all past (future)-directed past (future)-inextendible chronal curves from $q$ intersect $Y^{\mp}$, and therefore eventually also intersect $\partial ^{\mp}Y$.  Therefore $p$ cannot lie in int$[Y]$. If $p\in \partial_{B} Y-\partial_{B} \sigma$, then $p\in \partial ^{+}[Y]$ or $\partial ^{-}Y$ by definition. In either case there is a chronal curve from each $q\in \partial^{\pm}Y$ to $\partial^{\mp}Y$. So $p\notin \partial_{B} Y-\partial_{B} \sigma$. If $p\in Y$, then the only remaining possibility is $p\in \partial_{B} \sigma$ and we are done.  Suppose $p\notin Y$.  Because $p\in \partial J^{+}_{B}[Y^{-}]$,   $p\in D^{+}[\partial\Sigma]$,  where $\partial\Sigma$ is any boundary Cauchy slice containing $\sigma$.  In particular,  $p\in \partial J^{+}_{B}[Y^{-}]\cap D^{+}[\partial\Sigma]$.  Since $Y^{-}\subset D^{-}[\sigma]$, every causal curve from $Y^{-}$ intersects $\sigma$, so $p\in J^{+}[\sigma]$. Similarly, doing the same for $Y^{+}$, we find $p\in J^{-}[\sigma]$. Thus $p \in J^{+}[\sigma]\cap J^{-}[\sigma]$. But $\sigma$ is acausal, so $J^{+}[\sigma]\cap J^{-}[\sigma]=\sigma$.  Since ${\rm int}[\sigma]\subset {\rm int} [Y]$, we find $p\in \partial_{B} \sigma$.
\end{proof}

\begin{lem}\label{Lemma7}
There exists a Cauchy slice $\Sigma$ of $M\cup B$ such that:
\begin{enumerate}
    \item $\sigma \subset \Sigma$
    \item $\partial J^{+}[Y]\cap \partial J^{-}[Y]\subset \Sigma$
\end{enumerate}
\end{lem}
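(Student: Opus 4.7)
The plan is to assemble the closed, achronal set
$$K \,:=\, \sigma \,\cup\, \bigl(\partial J^{+}[Y]\cap \partial J^{-}[Y]\bigr),$$
verify its closedness and achronality in $M\cup B$, and then extend $K$ to a (topological) Cauchy slice using the standard fact that any closed achronal set in a globally hyperbolic spacetime lies in one.

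For achronality, the bulk piece $\partial J^{+}[Y]\cap\partial J^{-}[Y]$ is achronal as a subset of the achronal set $\partial J^{+}[Y]=J^{+}[Y]\setminus I^{+}[Y]$. The slice $\sigma$ is acausal in $B$ by hypothesis, and hence remains achronal in $M\cup B$ because the bulk respects boundary causality. The only potentially problematic case is a mixed chronal pair: if $p\in\sigma$ and $q\in\partial J^{+}[Y]\cap \partial J^{-}[Y]$ satisfied $q\in I^{+}[p]$, then $q\in I^{+}[\sigma]\subset I^{+}[Y]$, contradicting $q\in\partial J^{+}[Y]$; the reverse case $q\in I^{-}[p]$ is excluded symmetrically using $\partial J^{-}[Y]$. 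Together with the closedness of each piece, this shows $K$ is closed and achronal. Lemma~\ref{lemma3} further tells us $\partial J^{+}[Y]\cap\partial J^{-}[Y]\cap B=\partial_{B}\sigma\subset\sigma$, so the two components of $K$ meet continuously along $\partial_{B}\sigma$ and $K\cap B=\sigma$.

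To produce $\Sigma$, I would pick a Cauchy temporal function $t$ on $M\cup B$ (available by global hyperbolicity) and perturb it to a function $t'$ that vanishes on $K$ while keeping its level sets spacelike; the achronality of $K$ is precisely the condition that makes such a perturbation consistent in a tubular neighborhood of $K$ along the flow of $\nabla t$. Setting $\Sigma:=(t')^{-1}(0)$ then yields both conditions (1) and (2) by construction.

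The main obstacle is this last step. While the perturbation is locally unproblematic away from $\partial_{B}\sigma$, carrying it out uniformly across the codimension-$3$ junction where the bulk and boundary pieces of $K$ meet requires care. I would expect this to be handled either by a Bernal--S\'anchez-style smoothing argument applied to the closed achronal set $K$, or by first extending $\sigma$ to a full Cauchy slice $\tau$ of $B$ and then applying the extension result to the achronal set $\tau\cup(\partial J^{+}[Y]\cap\partial J^{-}[Y])$, whose achronality is established by essentially the same argument given above.
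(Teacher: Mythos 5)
Your proof is correct and takes essentially the same route as the paper's: show that $\sigma$ and $\partial J^{+}[Y]\cap\partial J^{-}[Y]$ are mutually achronal, use Lemma~\ref{lemma3} to handle their overlap on $\partial_{B}\sigma$, and then appeal to a Bernal--S\'anchez-type extension of the resulting closed achronal set to a Cauchy slice of $M\cup B$. Your treatment of the mixed chronal pair is in fact slightly more direct than the paper's (which routes a chronal curve through $\partial^{-}Y$ and contradicts the achronality of $\partial J^{+}[\partial^{-}Y]$, rather than contradicting $q\in\partial J^{+}[Y]$ via $q\in I^{+}[\sigma]\subset I^{+}[Y]$), and your flagged smoothing/extension step is handled by the paper at the same level of detail, namely by citing the Bernal--S\'anchez result.
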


\begin{proof}
Because $Y$ is a timelike hypersurface in $M\cup B$, $\sigma$ and $\partial J^{+}[Y]\cap \partial J^{-}[Y]$ are each acausal and compact in the conformal completion $M\cup B$, so  $\Sigma$ can be picked to fit either one on its own~\cite{BerSan05}; the question is whether they are mutually acausal. By Lemma~\ref{lemma3}, $\partial J^{+}[Y]\cap \partial J^{-}[Y]\cap B=\partial _{B}\sigma$. So there exists a $\Sigma$ containing $\partial J^{+}[Y]\cap \partial J^{-}[Y]$ and $\partial _{B}\sigma$. It remains to show that this $\Sigma$ can also contain int$[\sigma]$. Suppose not. Then $\exists p \in$ int$_{B}[\sigma]$ which is chronal to some $q$ in $\partial J^{+}[Y]\cap \partial J^{-}[Y]$. That is, there exists a chronal curve, which we WLOG take to be future-directed, from $p$ to a point $q\in \partial J^{+}[Y^{-}]\cap \partial J^{-}[Y^{+}]$. Since $p\in {\rm int}_{B}[\sigma]$, $p\in I^{+}[\partial^{-}Y]$, which means that there exists a future-directed chronal curve from $\partial ^{-}Y$ to $p$ and a future-directed causal curve from $p$ to $q$. Thus there exists a chronal future-directed curve from $\partial^{-}Y$ to $q$. But $q\in \partial J^{+}[\partial^{-}[Y]]$, and since $\partial^{-}Y$ is achronal, $\partial^{-}[Y]\subset \partial J^{-}[\partial^{-}Y]$. Since no two points on $\partial J^{-}[\partial^{-}Y]$ can be chronally-separated, we find a contradiction. 
\end{proof}

\begin{lem}\label{lemma9}
    For any choice of $\Sigma$ that contains $\sigma$, for $C\equiv J^{+}[Y]\cap  J^{-}[Y]\cap \Sigma$, every bulk causal curve from $Y^{-}$ to $Y^{+}$ intersects $C$. Furthermore, every bulk \textit{chronal} curve from $Y^{-}$ to $Y^{+}$ intersects int$|_{\Sigma}[C]$.
\end{lem}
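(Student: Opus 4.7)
The plan is to show that any bulk causal curve $\gamma$ from $p \in Y^-$ to $q \in Y^+$ (i) lies entirely in $J^+[Y] \cap J^-[Y]$ and (ii) must intersect $\Sigma$. For (i), observe that each point $x$ on $\gamma$ satisfies $x \in J^+(p) \cap J^-(q) \subseteq J^+[Y^-] \cap J^-[Y^+]$, and Lemma~\ref{LemmaA}(2) identifies this with $J^+[Y] \cap J^-[Y]$. This is essentially immediate once the correct application of Lemma~\ref{LemmaA} is pinned down.

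For (ii), the key geometric input is that $\sigma \subset \Sigma$ together with $Y^\pm = D_Y^\pm[\sigma]$. From this I would deduce $Y^- \subset J^-[\sigma] \subset J^-[\Sigma]$ and symmetrically $Y^+ \subset J^+[\Sigma]$, so that $p$ and $q$ lie on opposite sides of the acausal Cauchy slice $\Sigma$ (with the edge case $p \in \sigma$ or $q \in \sigma$ yielding the crossing point trivially). By continuity of the causal curve $\gamma$, together with acausality of $\Sigma$ and the Cauchy property of $\Sigma$ in $M \cup B$, $\gamma$ must meet $\Sigma$ at some point $r$. Combining with (i) gives $r \in J^+[Y] \cap J^-[Y] \cap \Sigma = C$, which is the first statement of the lemma.

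For the second statement, suppose $\gamma$ is chronal. Then the crossing point satisfies $r \in I^+(p) \cap I^-(q) \subseteq I^+[Y^-] \cap I^-[Y^+]$, which is an open subset of $M \cup B$ entirely contained in $J^+[Y] \cap J^-[Y]$. Intersecting this open neighborhood of $r$ with $\Sigma$ yields an open neighborhood of $r$ in the subspace topology on $\Sigma$ that lies inside $C$, hence $r \in \text{int}|_{\Sigma}[C]$.

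The main obstacle I anticipate is the chronal case — specifically showing $r$ lies in the \emph{relative} interior of $C$ within $\Sigma$, not merely in the interior of $J^+[Y] \cap J^-[Y]$ inside $M \cup B$. In principle, $r$ could accidentally sit on $(\partial J^+[Y] \cup \partial J^-[Y]) \cap \Sigma$, which is part of $\partial C$ relative to $\Sigma$; but chronality of $\gamma$ places $r$ strictly inside both $I^+[Y^-]$ and $I^-[Y^+]$, and since these achronal boundaries of $J^\pm[Y]$ cannot contain points of $I^\pm[Y^\mp]$, this potential pitfall is excluded. Everything else reduces to bookkeeping with Lemma~\ref{LemmaA}, the Cauchy slice property of $\Sigma$, and the subspace topology.
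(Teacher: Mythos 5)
Your proof is correct, but for the first claim it takes a genuinely different (and more elementary) route than the paper. The paper establishes that $\gamma$ crosses $\Sigma$ by invoking AdS topological censorship: $\gamma$ is deformed to a boundary curve with the same endpoints, that boundary curve must cross $\sigma$, and since $\Sigma\supset\sigma$ is spacetime-splitting the homotopy class of $\gamma$ forces an intersection with $\Sigma$. You instead argue directly that $Y^{-}=D^{-}_{Y}[\sigma]\subset J^{-}[\sigma]\subset J^{-}[\Sigma]$ and $Y^{+}\subset J^{+}[\Sigma]$, so the endpoints of $\gamma$ sit on opposite sides of the acausal Cauchy slice $\Sigma$ and a connected causal curve between them must meet it; combined with $\gamma\subset J^{+}(p)\cap J^{-}(q)\subset J^{+}[Y]\cap J^{-}[Y]$ (via Lemma~\ref{LemmaA}), this puts the crossing point in $C$. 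Your version avoids topological censorship entirely and needs only the partition $(M\cup B)\setminus\Sigma=I^{+}[\Sigma]\sqcup I^{-}[\Sigma]$, which is arguably cleaner; the paper's homotopy argument buys nothing extra here. For the chronal claim, the two arguments are essentially the same (openness of $I^{\pm}$), though you are more careful than the paper in only claiming the inclusion $I^{+}[Y]\cap I^{-}[Y]\cap\Sigma\subseteq{\rm int}|_{\Sigma}[C]$ rather than equality of the relative interior with $I^{+}[Y]\cap I^{-}[Y]\cap\Sigma$. The one loose end is the degenerate chronal case $r=p$ or $r=q$ (endpoint on $\sigma$), where $r\in I^{+}(p)\cap I^{-}(q)$ fails literally; it is repaired by noting that chronality to $Y^{+}$ excludes $p\in\partial_{B}\sigma$, so $p\in{\rm int}_{B}[\sigma]\subset{\rm int}[Y]=I^{+}_{B}[Y^{-}]\cap I^{-}_{B}[Y^{+}]$ by Lemma~\ref{NewLem3}, which again lands $r$ in ${\rm int}|_{\Sigma}[C]$. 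Worth a sentence, but not a gap in the method.
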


\begin{proof}
    By definition, every bulk causal curve $\gamma$ from $Y^{-}$ to $Y^{+}$ lives in $J^{+}[Y^{-}]\cap J^{-}[Y^{+}]$. Since $\sigma$ lies between $Y^{+}$ and $Y^{-}$, every such $\gamma$ has  (in the conformal extension) a future-endpoint in $J^{+}[\sigma]$ and a past-endpoint in $J^{-}[\sigma]$. Because the $\partial^{\pm}Y$ are achronal, every smooth deformation of $\gamma$ to a boundary curve $\gamma'$ with the same endpoints will result in a $\gamma'$ that intersects $\sigma$; note that by AdS topological censorship such deformations always exist~\cite{GalSch99}. Since we are picking $\Sigma$ to contain $\sigma$ and $\Sigma$ is spacetime-splitting by definition of a Cauchy surface, every smooth deformation of $\gamma$ that preserves its chronality and its endpoints must intersect $\Sigma$. Thus every bulk causal curve from $Y^{-}$ to $Y^{+}$ is contained in $J^{+}[Y^{-}]\cap J^{-}[Y^{+}]$ and intersects $\Sigma$. Therefore it must intersect $C$.

    We now turn to the chronal case. For a finite collection of sets, the intersection of the interiors is the interior of the intersection. Thus int$|_{\Sigma}[C]=I^{+}[Y]\cap I^{-}[Y]\cap \Sigma$. Thus the proof above follows for chronal curves to yield the desired result. 
\end{proof}

\begin{lem}\label{LemmaNew}
    $ \partial J^{+}[Y^{-}]\cap \partial J^{-}[Y^{+}]$ is a complete cross-section of the $\partial J^{\pm}[Y^{\mp}]$ congruences.
\end{lem}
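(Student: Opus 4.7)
The plan is to show that every null generator of $\partial J^{+}[Y^{-}]$ meets the intersection $I:=\partial J^{+}[Y^{-}]\cap\partial J^{-}[Y^{+}]$, and then invoke the time-reversed argument for generators of $\partial J^{-}[Y^{+}]$. First, using Lemma~\ref{LemmaA} I identify $J^{\pm}[Y^{\mp}]=J^{\pm}[Y]$, so the two congruences are just $\partial J^{\pm}[Y]$, and the generators of $\partial J^{+}[Y]=\partial J^{+}[\partial^{-}Y]$ are future-directed null geodesics with past endpoints on the compact set $\partial^{-}Y$. Lemma~\ref{NewLem4} together with boundary causality gives $\partial^{-}Y\subset Y\subset J^{-}_{B}[Y^{+}]\subset J^{-}[Y^{+}]$, so any such generator $\gamma$ begins inside $\overline{J^{-}[Y^{+}]}$.

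The central ingredient is compactness: since $M\cup B$ is globally hyperbolic and $Y$ is compact in $M\cup B$, the causal wedge $X_{Y}=J^{+}[Y]\cap J^{-}[Y]$ is compact. Extending $\gamma$ maximally as a null geodesic yields a future-inextendible causal curve in $M\cup B$, and no future-inextendible causal curve in a globally hyperbolic spacetime can remain in a compact set. Hence $\gamma$ must exit $X_{Y}$; because $\gamma\subset\overline{J^{+}[Y]}$ as a null generator of the future-set boundary, exiting $X_{Y}$ is equivalent to exiting $J^{-}[Y^{+}]$, so the first exit point $q$ lies on $\partial J^{-}[Y^{+}]$.

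I then need to confirm that $\gamma$ is still on $\partial J^{+}[Y]$ at $q$, so that $q\in I$. The concern is that $\gamma$ might cease to be a generator at an earlier caustic $r\in J^{-}[Y^{+}]$. I plan to rule this out by the following dichotomy: past any such caustic the extended null geodesic enters $I^{+}[Y]$, so if $r$ were strictly in the interior $I^{-}[Y^{+}]$ then a neighbourhood of $r$ along the post-caustic segment would lie in the open set $I^{+}[Y]\cap I^{-}[Y^{+}]$, forcing $r$ itself to be an interior point of $X_{Y}$ and contradicting $r\in\partial J^{+}[Y]$. Thus any generator endpoint inside $J^{-}[Y^{+}]$ must in fact lie on $\partial J^{-}[Y^{+}]$, which is exactly the statement $r\in I$. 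Uniqueness of the crossing follows from the achronality of past sets: once the future-directed curve $\gamma$ exits $J^{-}[Y^{+}]$, any re-entry would concatenate with a curve from the interior to $Y^{+}$ to yield a causal curve from outside $J^{-}[Y^{+}]$ to $Y^{+}$, a contradiction.

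I expect the caustic-exclusion argument in the third step to be the main obstacle, since it requires controlling the behaviour of the null congruence past its focal points and combining it with the achronal past-set structure of $\partial J^{-}[Y^{+}]$. The remaining ingredients (global hyperbolicity, compactness of the causal wedge, and past-set achronality) are standard and have already been used in earlier appendix lemmas; the time-reversed version for $\partial J^{-}[Y^{+}]$ generators is verbatim.
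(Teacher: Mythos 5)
Your overall strategy is genuinely different from the paper's: you compactify the problem (global hyperbolicity of $M\cup B$ plus compactness of $Y$ gives compactness of $J^{+}[Y]\cap J^{-}[Y]$, and inextendible causal curves must leave compact sets), whereas the paper traces points of $\partial^{+}Y$ backwards along the achronal hypersurface $\partial J^{-}[Y^{+}]$ and uses compactness of $\partial^{-}Y$ to force an exit from $J^{+}[Y^{-}]$. Your first two steps are sound: the maximally extended null geodesic does exit the compact wedge, and since it remains in the future set $J^{+}[Y]$, the exit occurs through $\partial J^{-}[Y^{+}]$. The gap is exactly where you flagged it: the caustic-exclusion dichotomy is a non sequitur. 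Knowing that the post-caustic segment of the extended geodesic lies in the open set $I^{+}[Y]\cap I^{-}[Y^{+}]$ only shows that $r$ is an accumulation point of interior points of $X_{Y}$, i.e.\ $r\in\overline{\mathrm{int}\,X_{Y}}$; it does not make $r$ an interior point, and indeed it cannot, since $r\in\partial J^{+}[Y]$ already forces $r\notin I^{+}[Y]$. A boundary point of a closed set being a limit of interior points is completely generic, so there is no contradiction with $r\in\partial J^{+}[Y]\cap I^{-}[Y^{+}]$. As written, your argument therefore does not exclude a generator terminating at a crossing or conjugate point strictly inside $I^{-}[Y^{+}]$ before reaching the rim; in that scenario the exit point $q$ you construct lies on the extension of the geodesic beyond its generator segment, where the curve is already in $I^{+}[Y]$ and hence no longer on $\partial J^{+}[Y^{-}]$, so $q$ would not be a point of the claimed cross-section.

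Note that the paper's proof sidesteps precisely this difficulty by working at the level of the hypersurfaces rather than individual geodesics: causal convexity places $\partial^{+}Y$ in $J^{+}[\partial^{-}Y]$, compactness of $\partial^{-}Y$ guarantees that moving far enough to the past along $\partial J^{-}[Y^{+}]$ one must leave $J^{+}[Y^{-}]$, and the transition locus is $\partial J^{+}[Y^{-}]\cap\partial J^{-}[Y^{+}]$; this yields the separation property of the intersection within each null boundary, which is what is actually invoked downstream in Lemma~\ref{lem:New013}, Lemma~\ref{Completeness}, and Theorem~\ref{Thm:Forwards}. To rescue your generator-by-generator version you would need an independent argument ruling out caustics of $\partial J^{+}[Y^{-}]$ in the interior of the causal wedge (or you should retreat to the separation formulation, for which your compactness framework can be adapted); the dichotomy as stated does not supply either.
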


\begin{proof}
By causal convexity, every point in $\partial ^{+}Y$ is in the future of $\partial^{-} Y$, but since $\partial ^{-}Y$ is a nonempty and compact past boundary, every point in $\partial^{+}Y$ which is evolved sufficiently far back into the past must exit the future of $Y^{-}$. Thus we find that $\partial J^{+}[Y^{-}]$ divides $J^{-}[Y^{+}]$ into points that are still in the future of $Y^{-}$ (i.e. points in $Y$) and points that are on curves after they have already left the future of $Y^{-}$. By continuity, the same holds for $\partial J^{-}[Y^{+}]$: every point on $\partial J^{-}[Y^{+}]$ either has a past-directed curve towards $Y^{-}$, in which case the point can be deformed along $\partial J^{-}[Y^{+}]$ towards the past until this condition is false, or not. In particular, there is no point on $\partial^{+}Y$ that can be arbitrarily deformed towards the past along the congruence $\partial J^{-}[Y^{+}]$ without crossing $\partial J^{+}[Y^{-}]$.
Thus $\partial J^{+}[Y^{-}]$ is a complete cross-section of $\partial J^{-}[Y^{+}]$ and vice-versa. 
\end{proof}

\begin{lem}\label{lem:New013} There exists a choice of Cauchy slice $\Sigma$ of $M\cup B$  such that $\partial C_{Y}\subset  \partial J^{+}[Y]\cap \partial J^{-}[Y]\cup \sigma$ and edge$[C_{Y}]=\partial_{B}\sigma$. In fact, for every such $\Sigma$, $\partial C_{Y}=  \partial J^{+}[Y]\cap \partial J^{-}[Y]\cup \sigma$. Here as usual $\sigma$ is a Cauchy slice of $Y$.
\end{lem}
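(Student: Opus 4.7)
The plan is to invoke Lemma~\ref{Lemma7} to produce a Cauchy slice of $M\cup B$ containing both $\sigma$ and $\alpha_Y \equiv \partial J^+[Y] \cap \partial J^-[Y]$, then refine the choice inside the causal wedge $W \equiv J^+[Y] \cap J^-[Y]$ so that the slice meets the null sheets of $\partial W|_M$ only at $\alpha_Y$. With this refined $\Sigma$, the boundary of $C_Y = W \cap \Sigma$ taken inside $\Sigma$ lies in $\sigma \cup \alpha_Y$. The strengthening to equality --- that the inclusion is in fact saturated for every such $\Sigma$ --- then follows from a topological argument using the complete cross-section property established in Lemma~\ref{LemmaNew}.

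For existence, I would use that $W$ is itself globally hyperbolic (being causally convex in $M \cup B$), that by Lemma~\ref{LemmaNew} every generator of $\partial J^\pm[Y]$ inside $W$ runs from $\partial^\mp Y$ all the way to $\alpha_Y$ without terminating first, and that by Lemma~\ref{lemma3} the sets $\sigma$ and $\alpha_Y$ meet only along $\partial_B \sigma$. Together these make $\sigma \cup \alpha_Y$ a compact, achronal, Cauchy-splitting surface of $W$ (with a corner along $\partial_B \sigma$). I would exhibit an acausal codimension-one hypersurface of $W$ having $\sigma\cup \alpha_Y$ as its boundary and then extend it to a global Cauchy slice $\Sigma$ of $M\cup B$ via the extension argument used in Lemma~\ref{Lemma7}. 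Because the null sheets of $\partial W|_M$ are entirely foliated by generators that end at $\alpha_Y$, any acausal slice of $W$ with boundary $\sigma \cup \alpha_Y$ meets these sheets only at $\alpha_Y$, so $\partial C_Y \subset \sigma \cup \alpha_Y$.

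For the reverse inclusion under the standing hypothesis, $\sigma \subset \partial C_Y$ is immediate from $C_Y \cap B = Y \cap \Sigma = \sigma$ (using $W \cap B = Y$, which holds by boundary causality and causal convexity) together with the convention $\partial Q = \partial Q|_M \cup (Q \cap B)$. For $\alpha_Y \subset \partial C_Y$, I would first argue $\alpha_Y \subset \Sigma$: if some $p \in \alpha_Y$ were absent from $\Sigma$, then near $p$ the slice would have to cross one of the null sheets $\partial J^\pm[Y] \cap W$ off $\alpha_Y$ (since $\alpha_Y$ is the unique locus in $W$ where the two sheets meet and a spacelike $\Sigma$ must pass between them), producing a boundary point of $C_Y$ outside $\sigma \cup \alpha_Y$ in contradiction with the hypothesis. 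Once $\alpha_Y \subset \Sigma$, any $p \in \alpha_Y$ lies on both $\partial J^+[Y]$ and $\partial J^-[Y]$, so arbitrarily small neighborhoods of $p$ in $\Sigma$ contain points outside $J^+[Y]$ as well as points outside $J^-[Y]$, forcing $p \in \partial C_Y$. The edge computation $\mathrm{edge}[C_Y] = \partial_B(C_Y \cap B) = \partial_B \sigma$ is then immediate from the definition.

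The main obstacle is the construction step: producing a hypersurface of $W$ whose boundary is exactly $\sigma \cup \alpha_Y$ and whose interior avoids the null sheets of $\partial W|_M$. This is a global statement about the causal geometry of the wedge, and Lemma~\ref{LemmaNew} --- the claim that $\alpha_Y$ is a complete cross-section of both congruences --- is precisely what rules out the a priori possibility that caustics along the generators force such a hypersurface to graze the null boundary away from $\alpha_Y$.
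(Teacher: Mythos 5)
Your proposal is correct and, for the containment $\partial C_{Y}\subset(\partial J^{+}[Y]\cap\partial J^{-}[Y])\cup\sigma$, follows the same route as the paper: Lemma~\ref{Lemma7} supplies a Cauchy slice through $\sigma$ and $\alpha_{Y}\equiv\partial J^{+}[Y]\cap\partial J^{-}[Y]$, and Lemma~\ref{LemmaNew} (completeness of the cross-section) rules out any crossing of the null sheets away from $\alpha_{Y}$; the paper likewise derives $C_{Y}\cap B=\sigma$ from boundary causality together with Lemmas~\ref{NewLem4} and~\ref{LemmaA}. Where you genuinely diverge is the reverse inclusion $\alpha_{Y}\subset\partial C_{Y}$: the paper runs a piecewise-null causal curve from $\partial^{-}Y$ to $\partial^{+}Y$ through $p\in\alpha_{Y}$ (via Lemmas~\ref{LemmaA} and~\ref{lemma9}) and eliminates the cases where it meets $C_{Y}$ in its interior or away from $p$, whereas you argue directly that $p\in J^{+}[Y]\cap J^{-}[Y]\cap\Sigma$ (closedness of $J^{\pm}$ of a compact set) while every $\Sigma$-neighborhood of $p$ leaves $J^{+}[Y]$ and $J^{-}[Y]$, so $p$ lies in $C_{Y}$ but not in its $\Sigma$-interior. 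Your version is shorter and avoids the case analysis; it does rely on the same acausality-of-$\Sigma$ deformation step the paper uses to exclude $p\in\mathrm{int}[C_{Y}]$, so you should state explicitly that $p\in C_{Y}$ before concluding $p\in\partial C_{Y}$. You are also more careful than the paper about the quantifier ``for every such $\Sigma$'': your argument that the hypothesis $\partial C_{Y}\subset\alpha_{Y}\cup\sigma$ already forces $\alpha_{Y}\subset\Sigma$ (else a generator ending at a missing $p\in\alpha_{Y}$ would puncture $\Sigma$ at a point of $\partial C_{Y}$ off $\alpha_{Y}\cup\sigma$) is sketched informally but is sound and worth making explicit, since the paper's second half tacitly assumes the constructed slice rather than an arbitrary one satisfying the inclusion.
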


\begin{proof}
We first show $C_{Y}\cap B=\sigma$ and thus Edge$[C_{Y}]=\partial_{B}\sigma$, by definition of Edge. We can always pick $\Sigma$ to contain a Cauchy slice $\sigma$ of $Y$.  By boundary causality, since $C_{Y}\cap B=J^{+}[Y]\cap J^{-}[Y]\cap \Sigma\cap B$, $C_{Y}\cap B=J^{+}_{B}[Y]\cap J^{-}_{B}[Y]\cap \Sigma$. By Lemma~\ref{NewLem4} and Lemma~\ref{LemmaA}, $J^{+}_{B}[Y]\cap J^{-}_{B}[Y] = Y$, $C_{Y}\cap B=Y\cap \Sigma$. Since $\sigma$ is a Cauchy slice of $Y$ and $\sigma\subset \Sigma$, we find $C_{Y}\cap B=\sigma$ and Edge$[C_{Y}]=\partial _{B}\sigma$.

We now turn to $\partial C_{Y}\cap M$. By assumption $C_{Y}$ is not a complete Cauchy slice. Let $A=\Sigma\backslash C_{Y}$. Let $\gamma\subset \Sigma\cap M$ be a bulk curve from $C_{Y}$ to $A$ and assume WLOG that $\gamma$ only crosses from $C_{Y}$ to $A$ once in the segment under consideration. To get from $C_{Y}$ to $A$, $\gamma$ must cross $\partial C_{Y}$ at some point in $M$. Since $C_{Y}=J^{+}[Y]\cap J^{-}[Y]\cap \Sigma$, to exit $C_{Y}$ in $M$, $\gamma$ must cross $\partial J^{+}[Y]$ or $\partial J^{-}[Y]$ while remaining on $\Sigma$. By Lemma~\ref{LemmaNew}, $\partial J^{+}[Y]\cap \partial J^{-}[Y]$ is a complete cross-section of the $\partial J^{\pm}[Y]$ congruences. By Lemma~\ref{Lemma7} there exists a Cauchy slice $\wt \Sigma$ such that $\partial J^{+}[Y]\cap \partial J^{-}[Y]\subset \wt \Sigma$ and $\sigma\subset \wt \Sigma$. Let $\Sigma$ be such a $\wt \Sigma$. Then it is impossible to cross $\partial C_{Y}$ in $M$ without crossing $\partial J^{+}[Y]\cap \partial J^{-}[Y]$. Thus there exists a choice of $\Sigma$ such that $\partial C_{Y}\cap M\subset \partial J^{+}[Y]\cap \partial J^{-}[Y]$ and $\sigma\subset \Sigma$. 

We now turn to showing that for such a $\Sigma$, $\partial C_{Y} = (\partial J^{+}[Y]\cap \partial J^{-}[Y])\cup \sigma$. One direction of the inclusion is already established above, so we turn to the other direction. We have already shown that $C_{Y}\cap B=\sigma$, so it remains to treat $C_{Y}\cap M$. Let $p\in \partial J^{+}[Y]\cap \partial J^{-}[Y]$.

By Lemma~\ref{LemmaA}, $\forall p\in \partial J^{+}[Y]\cap \partial J^{-}[Y]$, there exists a piecewise null bulk causal curve $\gamma$ from $\partial^{-}Y$ to $\partial^{+}Y$ through $p$ where each piece from $\partial^{\pm}Y$ to $p$ is achronal. By Lemma~\ref{lemma9}, every bulk causal curve from $Y^{-}$ to $Y^{+}$ intersects $C_{Y}$.  $\gamma$ could (1) intersect the interior of $C_{Y}$ at $p$, in which case $p\in {\rm int} [C_{Y}]$, or it could (2) intersect $C_{Y}$ but not at $p$, or it could (3) intersect $\partial C_{Y}$ at $p$. We would like to rule out options (1) and (2). If $p\in {\rm int}[C_{Y}]$, every small deformation of $p$ on $\Sigma$ would leave it in $C_{Y}$. But $p\in \partial J^{\pm}[Y]$, so there are small deformations of $p$ on $\Sigma$ that take it out of $J^{\pm}[Y]$ (since $\Sigma$ is by definition acausal). Since $C_{Y}=J^{+}[Y]\cap J^{-}[Y]\cap \Sigma$, this would also take $p$ out of $C$. So $p\notin C_{Y}$. This rules out (1). If $\gamma$ intersects $C_{Y}$ but not at $p$, then by Lemma~\ref{lemma9}, $\gamma$ intersects $C_{Y}$ either in the past of $p$ or in the future of $p$. Take WLOG the past. There are no chronal curves from $p$ to $Y^{\pm}$ since $p\in \partial J^{+}[Y]\cap \partial J^{-}[Y]$; if $\gamma$ intersects int$[C_{Y}]$ at some point $q$, then there is a chronal curve from $q$ to $\partial^{-}Y$. So $p$ is chronally-separated from $\partial^{-}Y$ even though it lies on $\partial J^{+}[\partial^{-}Y]$, leading to a contradiction. So $\gamma$ must intersect $\partial C_{Y}$. Since a broken null curve is also chronal, we get the same contradiction unless $p\in\partial C_{Y}$ ($C_{Y}$ cannot itself be causal since we have assumed all Cauchy slices are acausal).
\end{proof}

\begin{lem}\label{Completeness} For any Cauchy slice $\Sigma$ satisfying $\partial C_Y = (\partial J^{+}[Y^{-}]\cap \partial J^{-}[Y^{+}])\cup \sigma$, $J^{+}[Y^{-}]\cap \Sigma=J^{-}[Y^{+}]\cap\Sigma =J^{+}[Y^{-}]\cap J^{-}[Y^{+}]\cap \Sigma=C_{Y}$.    
\end{lem}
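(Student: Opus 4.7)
My plan is to reduce the statement to two independent arguments. Using Lemma~\ref{LemmaA} to rewrite $J^{\pm}[Y] = J^{\pm}[Y^{\mp}]$, the third equality $J^{+}[Y^{-}]\cap J^{-}[Y^{+}]\cap \Sigma = C_{Y}$ is automatic from the definition of $C_{Y}$. It then suffices to prove $A_{+} := J^{+}[Y^{-}]\cap \Sigma \subseteq C_{Y}$; the equality $A_{-} := J^{-}[Y^{+}]\cap \Sigma = C_{Y}$ then follows by the time-reversed argument, and all three sets coincide.

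First I would establish the key structural fact that $\partial_\Sigma A_{+} \cap M = W \cap M$, where $W := \partial J^{+}[Y^{-}]\cap \partial J^{-}[Y^{+}]$ (and the analogous statement for $A_{-}$). The inclusion $\supseteq$ is immediate from $W \subset \Sigma$. For $\subseteq$, any $p \in \partial J^{+}[Y^{-}]\cap \Sigma \cap M$ lies on a null generator $\mu$ of $\partial J^{+}[Y^{-}]$ issued from $\partial^{-}Y$ (Lemma~\ref{LemmaA}, part 4). By Lemma~\ref{LemmaNew}, $\mu$ meets $W$. Because $W \subset \Sigma$, $\Sigma$ is acausal by convention, and $\mu$ is causal, $\mu$ intersects $\Sigma$ at most once; hence this unique intersection coincides with $\mu \cap W$ and $p \in W$.

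To conclude $A_{+} \subseteq A_{-}$, I would argue by contradiction. Suppose $p \in A_{+} \setminus A_{-}$. Since $\partial_\Sigma A_{+} \cap M = W \cap M \subset A_{-}$ and $A_{-}$ is closed with $p \notin A_{-}$, we must have $p \in \mathrm{int}_\Sigma A_{+} \cap M$. Next, join $p$ within $\Sigma \cap M$ to a bulk point arbitrarily close to $\mathrm{int}(\sigma)$; such near-$\sigma$ bulk points lie in $I^{+}[Y^{-}]\cap I^{-}[Y^{+}]$ via short timelike curves reaching $\partial^{\mp}Y$, and therefore in $A_{-}$. Along the connecting path the indicator $\chi_{A_{-}}$ flips from $0$ to $1$, which by the previous step can only happen across $W \cap M$. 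But $W \subset \partial_\Sigma A_{+}$, so the same crossing forces the path to exit $\mathrm{int}_\Sigma A_{+}$, contradicting its confinement to the connected component of $\mathrm{int}_\Sigma A_{+}$ containing $p$. The companion statement $A_{+}\cap B = A_{-}\cap B$ follows by the analogous argument internal to the separately globally hyperbolic $B$, using boundary causality.

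The hard part will be the topological piece: ensuring $p$ can indeed be connected within $\mathrm{int}_\Sigma A_{+}$ to the bulk collar of $\sigma$. Since $\partial_\Sigma A_{+} \cap M = W \cap M$ is codim-$1$ in $\Sigma \cap M$, every connected component of $\mathrm{int}_\Sigma A_{+}$ is adjacent to $W$; the hypothesis $\partial_\Sigma C_{Y} = W \cup \sigma$ must then be used to show that at each $r \in W$ the ``$A_{+}$-inside'' side agrees with the ``$A_{-}$-inside'' side, ruling out the degenerate configuration in which $A_{+}$ and $A_{-}$ flip membership in opposite directions across $W$ (which would leave $C_{Y}$ with empty $\Sigma$-interior, contrary to its intended role as a substantial codim-$1$ hypersurface). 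This coincidence of interior sides ensures that every component of $\mathrm{int}_\Sigma A_{+}$ lies in $A_{-}$, which then completes the contradiction above.
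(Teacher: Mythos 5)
Your reduction to $A_{+}\subseteq A_{-}$ and your ``structural fact'' are both correct, and the latter is exactly the engine of the paper's own proof: a point of $\partial J^{+}[Y^{-}]\cap\Sigma$ lies on a null generator that must meet $W=\partial J^{+}[Y^{-}]\cap\partial J^{-}[Y^{+}]\subset\Sigma$ by Lemma~\ref{LemmaNew}, and acausality of $\Sigma$ forces the two intersection points to coincide. Up to that point you and the paper are doing the same thing.

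The gap is in the concluding topological step, and you have located it yourself but not closed it. The entire content of the lemma is the claim that at each $r\in W$ the ``$A_{+}$-inside'' and the ``$A_{-}$-inside'' of $\Sigma\setminus W$ agree; asserting that the opposite configuration ``would leave $C_{Y}$ with empty $\Sigma$-interior, contrary to its intended role'' appeals to what $C_{Y}$ is supposed to be rather than to anything established --- the hypothesis $\partial C_{Y}=W\cup\sigma$ by itself does not exclude $C_{Y}$ being locally lower-dimensional (locally equal to $W$) near some $r$, since a lower-dimensional closed set is its own boundary. To close this you need an independent input: by Lemma~\ref{LemmaA} every $r\in W$ lies on a broken null causal curve from $\partial^{-}Y$ to $\partial^{+}Y$ (this is the device used in the proof of Lemma~\ref{lem:New013}), whence $r\in\overline{I^{+}[Y^{-}]\cap I^{-}[Y^{+}]}$ and, after pushing such nearby chronal points onto $\Sigma$, $r$ is a limit of points of $I^{+}[Y^{-}]\cap I^{-}[Y^{+}]\cap\Sigma\subseteq{\rm int}_{\Sigma}C_{Y}$; this is what pins the two ``insides'' to the same local component. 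You also leave unproved the claim that every component of ${\rm int}_{\Sigma}A_{+}$ is adjacent to $W$ or to $\sigma$ (here you need the standing appendix assumption that $J^{+}[Y^{-}]\cap\Sigma\cap M\neq\Sigma\cap M$, without which $A_{+}$ could a priori have a boundaryless component). The paper sidesteps the component bookkeeping entirely by arguing along an arbitrary curve on $\Sigma$ from $p$ to a point outside $J^{+}[Y^{-}]$: since every entry into or exit from either of $J^{\pm}[Y^{\mp}]$ along $\Sigma$ can occur only at $W$, the first crossing of $\partial J^{+}[Y^{-}]$ already lands on $W\subset J^{-}[Y^{+}]$ while $p\notin J^{-}[Y^{+}]$, and the contradiction is derived from the local structure of $\partial J^{-}[Y^{+}]$ there rather than from global connectivity.
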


\begin{proof}
By contradiction. Assume there exists a Cauchy slice $\Sigma$ containing $\partial C_{Y}$ as above where $J^{+}[Y^{-}]\cap \Sigma\neq C_{Y}$. The time reverse follows \textit{mutatis mutandis}. Let $p\in J^{+}[Y^{-}]\cap \Sigma / C$. By Lemma~\ref{LemmaNew} $\partial C_{Y}$ is a complete cross-section of $\partial J^{+}[Y^{-}]$; thus since $\Sigma$ is acausal, $p$ can only live in $\partial J^{+}[Y^{-}]$ if it is in $\partial C_{Y}\subset C_{Y}$. Since by assumption $p\notin C_{Y}$, we find that $p\in I^{-}[Y^{+}]$. Consider all possible curves that start at $p$ and propagate on $\Sigma$. By assumption, $\exists q\in \Sigma$, $q\notin J^{+}[Y^{-}]$. So there exist curves from $p$ on $\Sigma$ that exit $J^{+}[Y^{-}]$ and thus must cross $\partial J^{-}[Y^{+}]\cap \Sigma$. Since $\partial C_{Y}$ is a complete cross-section of $\partial J^{-}[Y^{+}]$ and $\Sigma$ is acausal, this means that all curves from $p$ must cross $\partial C_{Y}$ in order to exit $J^{-}[Y^{+}]$. But then there is an $O$ which is open in $\Sigma$, where $O\subset J^{-}[Y^{+}]$ and $O$ contains points that live in $\partial J^{-}[Y^{+}]$. This is a contradiction with the definition of $\partial J^{-}[Y^{+}]$. So $p\in C_{Y}$.
\end{proof}

\bibliographystyle{jhep}
\bibliography{all}
\end{document}